\documentclass[journal]{IEEEtran}
%

\usepackage{graphicx}
\usepackage{subcaption}
\usepackage{float}
\usepackage{physics}
\usepackage{cite}
\usepackage{amsmath,amsthm,amsfonts,amssymb,amscd}
\usepackage{amsmath}
\usepackage{xcolor}
\usepackage{bm}
\usepackage{cuted}
\usepackage{algorithm}
\usepackage[noend]{algpseudocode}
\usepackage{multirow}
\usepackage{verbatim}
\usepackage[hidelinks]{hyperref}

\usepackage{mathtools}
\setlength\stripsep{3pt plus 1pt minus 1pt}

\newtheorem{theorem}{Theorem}

\begin{document}
\title{System-wide Instrument Transformer Calibration and Line Parameter Estimation Using PMU Data}
%
%
%

\author{Antos~Cheeramban~Varghese,~\IEEEmembership{Student Member,~IEEE,}
        and Anamitra~Pal,~\IEEEmembership{Senior~Member,~IEEE}
\thanks{This work was supported in part by the National Science Foundation
(NSF) Grants OAC 1934766 and ECCS 2145063.\\
The authors are associated with the School of Electrical, Computer, and
Energy Engineering, Arizona State University, Tempe, AZ 85287,
USA.}
}

%
%

\markboth{Under Peer Review}%
{Shell \MakeLowercase{\textit{et al.}}: Bare Demo of IEEEtran.cls for IEEE Journals}
%



\maketitle

\begin{abstract}
Uncalibrated instrument transformers (ITs) can degrade the performance of downstream applications that rely on the voltage and current measurements that ITs provide.
It is also well-known
that phasor measurement unit (PMU)-based system-wide IT calibration and line parameter estimation (LPE) are interdependent problems.
In this paper, we present a statistical framework for solving the \underline{s}imultaneous \underline{L}PE and \underline{I}T \underline{c}alibration (SLIC) problem using synchrophasor data.
The proposed approach not only avoids the need for a ``perfect'' IT by judiciously placing a revenue quality meter (which is an expensive but non-perfect IT), but also accounts for the variations typically occurring in the line parameters. 
The results obtained using the IEEE 118-bus system
as well as actual power system data demonstrate the high accuracy, robustness, and practical utility of the proposed approach. 
\end{abstract}

\begin{IEEEkeywords}
Calibration, Instrument transformer (IT), Line parameter estimation (LPE), Phasor measurement unit (PMU)
\end{IEEEkeywords}

\IEEEpeerreviewmaketitle

\section{Introduction}
\label{Intro}

\IEEEPARstart{I}{nstrument} transformers (ITs), which comprise voltage transformers (VTs) and current transformers (CTs), are key components of the power system measurement infrastructure.
They are responsible for \emph{accurately} scaling down the very high voltages and currents to levels that can be safely handled by the power system's instrumentation system. %
If the ITs are not able to faithfully reproduce the scaled down version of the primary-side signal on their secondary side, it will impact critical power system applications, such as relay operation, state estimation, and fault location \cite{phadke2017synchronized}.
The degradation in IT performance can occur due to environmental conditions as well as aging.
In this paper, degradation due to the non-ideal scaling of the ITs is expressed 
in terms of
\textit{ratio error} (RE)\footnote{Note that we are using the term RE to denote both the magnitude as well as the angle component of the non-ideal scaling; i.e., RE is a complex number. We use $\alpha$ to denote RE in this paper, and hence $\alpha \in \mathbb{C}$.}.
The range in which REs lie is given
in IEEE C57.13 Standard \cite{IEEE_C57_13_2016_std_for_ITs}.
IT calibration is
the process of off-setting the impact of REs on downstream applications by finding suitable \textit{correction factors}\footnote{Correction factors (denoted by $\tau$) are the inverse of the REs; i.e., $\tau = 1/\alpha$.}.
The traditional methods of IT calibration require external hardware (e.g., transformer comparator, highly accurate calibrator) \cite{brandolini2009simple,crotti2017industrial,siegenthaler2017computer}.
Consequently, such \emph{hard calibration} methods are difficult to apply when the substations are in-service.
With the introduction of phasor measurement units (PMUs), \emph{soft calibration} methods were proposed \cite{shi2012adaptive,zhou2012calibrating,chatterjee2018error}, which calibrated the ITs remotely using synchrophasor data and could be done at any time. 
However, a common drawback of these soft calibration methods was the need for prior
knowledge of the line parameters.
Now, line parameters can be estimated from PMU data \cite{varghese2022transmission}.
However,
if PMU measurements from \textit{uncalibrated} ITs are used for line parameter estimation (LPE), then the estimation quality will be poor \cite{mishra2015kalman,mansani2018estimation}. Thus, LPE and IT calibration constitutes a ``chicken-and-egg" problem. 

The first attempt at solving this ``chicken-and-egg"
problem was made in \cite{wu2015simultaneous}. However, it assumed the from-end ITs were ideal. It also reported a very high sensitivity to measurement noise and flagged it for future research.
Next, a \textit{bias error detection} (BED) test was proposed \cite{khandeparkar2016detection} to estimate line parameters and detect bias in ITs using voltage and current measurements from PMUs.
The BED test was improved upon
in \cite{goklani2020instrument} through the \textit{midpoint voltage} (MPV) test, which estimated line parameters and determined if the ITs were bad or not. However, both the BED test and the MPV test required the from-end ITs to be error-free.

Based on our literature survey, we have identified four practical aspects of the system-wide \underline{s}imultaneous \underline{L}PE and \underline{I}T \underline{c}alibration (SLIC) problem, as described below:
\begin{itemize}
    \item \textit{Reliance on a ``perfect" IT:} 
    Many prior works assumed the presence of an ideal/error-free IT, which is not realistic.
    The bulk power system (BPS) does have revenue quality meters (RQMs) \cite{pal2015online}, such as electro-optic VTs (EOVTs) and magneto-optic CTs (MOCTs), whose REs are considerably smaller than those of regular ITs\footnote{For example, a 0.5 accuracy class IT has a maximum magnitude error of 0.5\%, and a maximum angle error of 0.9 crad for CTs and 0.6 crad for VTs \cite{sitzia2022enhanced,pegoraro2022pmu}. Conversely, the maximum magnitude error of a RQM is 0.15\% and maximum angle error is 0.2 crad \cite{IEEE_C57_13_2016_std_for_ITs}.}.
    However, as RQMs are very expensive, it is best if system-wide SLIC is done with very few RQMs (ideally, one).
    \item \textit{Line parameter variations:} 
    Although line parameters change over time, 
    they lie within $\pm 30\%$ of their database values \cite{kusic2004measurement}.
    This knowledge should be exploited for solving the SLIC problem. 
    \item \textit{PMU data usage:} 
    The positive-sequence measurements produced by PMUs are most commonly used by power utilities for decision-making purposes \cite{wang2019transmission}. Therefore, it is valuable to determine
    the equivalent positive-sequence correction factors to compensate for the equivalent positive-sequence REs.
    \item \textit{PMU device errors:} The PMU device also adds an error (to the PMU output) that is independent of the non-ideal scaling of the ITs. This error 
    manifests as an additive Gaussian noise in the positive-sequence in Cartesian coordinates (see Fig. 3 of \cite{wang2019transmission}).
\end{itemize}

Considering these four practical aspects of the SLIC problem, this paper proposes a statistical framework to solve the problem.
Our main contributions are as follows:
\begin{enumerate}
    \item We develop a \textit{quantization procedure} 
    to uniquely identify the line parameters of a branch.
    To the best of our knowledge, this is the first paper that presents such a procedure to solve
    the SLIC problem.
    \item We formulate an algorithm that leverages the higher accuracy of RQMs for system-wide SLIC.
    This algorithm accurately estimates correction factors of all the ITs in a \textit{connected tree}\footnote{A connected tree is a set of branches defined as `From bus'-`To bus', such that every bus can be accessed from every other bus via the branches of the tree, and each bus has PMU. The connected tree is denoted by $\mathcal{L}$ in the paper.} 
    with just
    one RQM.
    \item We devise an algorithm to find \textit{that} RQM location (in case the connected tree does not have one already) which gives the least error for solving the SLIC problem. 
\end{enumerate}

Finally, we demonstrate: (a) \textit{high accuracy and robustness} by considering different levels of PMU noise and classes of ITs in the IEEE 118-bus system, 
and (b) \textit{practical utility} by 
solving the SLIC problem 
for a power company located in the U.S. Eastern Interconnection using field PMU data.


\section{Modeling the SLIC Problem for One Branch}
\label{SLIC Modeling}




\subsection{Basic Formulation}
\label{BasicForm}
Fig. \ref{Medium_length_line_Segment_pi_with_IT} shows a $\pi$-model of a medium-length transmission line that has CTs and VTs at both ends.
Here, $V$ and $I$ represent the complex voltage and current measurements from the PMUs, with $p$ and $q$ indicating the `from' and `to' ends. The line parameters to be estimated are resistance, \textcolor{black}{$r_{pq} \in \mathbb{R}$}, reactance, \textcolor{black}{$x_{pq} \in \mathbb{R}$}, and shunt susceptance present at each end, \textcolor{black}{$b_{pq} \in \mathbb{C}$}. These parameters for the branch $p$-$q$ are collectively represented by $\eta_{pq} = [r_{pq}, x_{pq}, b_{pq}]$. Next, we specify two derived variables: series impedance,  \textcolor{black}{$z_{pq} \in \mathbb{C}$} ($z_{pq}  = r_{pq} + j x_{pq}$), and series admittance,  \textcolor{black}{$y_{pq} \in \mathbb{C}$} ($y_{pq}  = {1}/{z_{pq}} $).
To solve the SLIC problem, we must
estimate the line parameters 
and the IT correction factors  
of every branch in the connected tree.
The relation between the line parameters and PMU measurements from either end of the branch $p$-$q$ can be obtained using Kirchhoff's laws as shown below:
\begin{equation}
\label{eqn:TLPE_Basic_KL_SLIC}
         \begin{aligned}
         I_{pq}^* &= b_{pq} V_{pq}^* + (V_{pq}^* - V_{qp}^*) y_{pq}\\
         I_{qp}^* &= b_{pq} V_{qp}^* - (V_{pq}^* - V_{qp}^*) y_{pq}
         \end{aligned}
\end{equation}
where the superscript `$*$' denotes the true value.
Eq. \eqref{eqn:TLPE_Basic_KL_SLIC} is valid
when (a) ITs have RE of unity (i.e., $\alpha = 1 + j0$), and (b) PMU measurements are noise-free.
Since these conditions are not met in practice, we modify \eqref{eqn:TLPE_Basic_KL_SLIC} to account for the non-idealities encountered in actual power system operation.

\begin{figure}[H]
    \centering
    \includegraphics[width=0.485\textwidth]{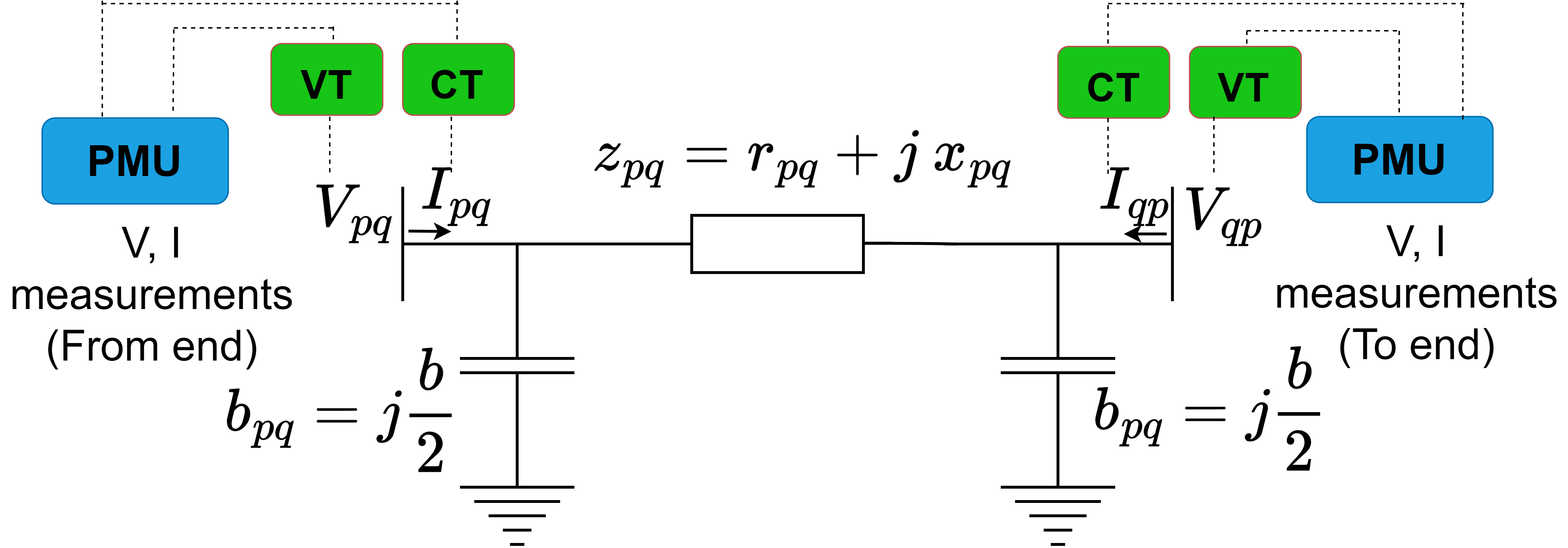}
    \caption{\textcolor{black}{$\pi$-model of transmission line used for SLIC}}\label{Medium_length_line_Segment_pi_with_IT}
\end{figure}

When the ITs have non-unity REs, they will appear as an \textit{unknown} multiplication factor with the 
true phasor value
\cite{phadke2017synchronized}.
Additionally, for the time period over which the measurements are collected for SLIC, the REs will be constant 
quantities. This is because the REs change at a much slower pace than the speed at which PMUs produce phasor measurements \cite{zhou2012calibrating}.
In light of these facts, we derive the following relation between the true values and the noisy measurements:
\begin{equation}
\label{eqn:Relation after IT - true}
         \begin{aligned}
         V_{pq} &= \alpha_{V_{pq}} V_{pq}^* +V_{pq_e} \\
         V_{qp} &= \alpha_{V_{qp}} V_{qp}^* +V_{qp_e} \\
         I_{pq} &= \alpha_{I_{pq}} I_{pq}^* +I_{pq_e} \\
         I_{qp} &= \alpha_{I_{qp}} I_{qp}^* + I_{qp_e}.
         \end{aligned}
\end{equation}

In \eqref{eqn:Relation after IT - true}, $V_{pq_e}$, $V_{qp_e}$, $I_{pq_e}$, and $I_{qp_e}$ denote the noise
added by the PMU device. 
It can be observed from \eqref{eqn:Relation after IT - true} that the PMU measurement has a \textit{composite} noise model, where the additive component is a random variable and the multiplicative component is an unknown constant.
It is also clear from \eqref{eqn:Relation after IT - true} that using only the measurement information, one cannot determine the REs because the ``true value" is also unknown.
Now, since the correction factor, $\tau$, is the inverse of the RE, $\alpha$, we can write the following equation:
\begin{equation}
\label{eqn:Relation after IT - true_Ks}
         \begin{aligned}
         V_{pq}^* &= \tau_{V_{pq}} (V_{pq} -V_{pq_e})  \\
         V_{qp}^* &= \tau_{V_{qp}} (V_{qp} -V_{qp_e}) \\
         I_{pq}^* &= \tau_{I_{pq}} (I_{pq} -I_{pq_e}) \\
         I_{qp}^* &= \tau_{I_{qp}} (I_{qp} - I_{qp_e}).
         \end{aligned}
\end{equation}


Next, by substituting \eqref{eqn:Relation after IT - true_Ks} in \eqref{eqn:TLPE_Basic_KL_SLIC} and rearranging, we get
\begin{equation}
\label{eqn:SLIC_T2_homogeneous}
         \begin{aligned}
          &\tau_{I_{pq}} (I_{pq}-I_{pq_e}) -  b_{pq}  \tau_{V_{pq}} (V_{pq} -V_{pq_e}) \\
          &-  y_{pq} \tau_{V_{pq}} (V_{pq} -V_{pq_e}) + y_{pq} \tau_{V_{qp}} (V_{qp} - V_{qp_e}) = 0\\
          &\tau_{I_{qp}} (I_{qp} - I_{qp_e}) -  b_{pq}  \tau_{V_{qp}} ( V_{qp} - V_{qp_e}) \\
          &+ y_{pq} \tau_{V_{pq}} (V_{pq}  - V_{pq_e}) - y_{pq} \tau_{V_{qp}}  (V_{qp} - V_{qp_e}) = 0.
         \end{aligned}
\end{equation}

The drawback of \eqref{eqn:SLIC_T2_homogeneous} is that it is a \textit{homogenous system of equations} due to which a unique solution cannot be obtained. 
The homogeneity occurs because every term in \eqref{eqn:SLIC_T2_homogeneous} has at least one of the parameters to be estimated (line parameters or IT correction factors) as a multiplier.
The next sub-section describes how an RQM can help circumvent this issue.

\subsection{Overcoming Homogeneity Issue with RQM}
One way to achieve a non-homogeneous linear system of equations is by rendering at least one term in \eqref{eqn:SLIC_T2_homogeneous} independent of the correction factor by which it is multiplied. This can be done by dividing each term in \eqref{eqn:SLIC_T2_homogeneous} by a common correction factor, to produce \textit{correction factor ratios}. 
Then, the term that was originally multiplied by this common divisor becomes parameter-independent and can be relocated to the right-hand side (RHS) of \eqref{eqn:SLIC_T2_homogeneous}, yielding a non-homogeneous linear regression model. However, this has the drawback that the parameters estimated from this set of equations will be the correction factor ratios instead of the correction factors of the ITs. 
This drawback was also identified in prior research on PMU-based IT calibration
\cite{zhou2012calibrating,wu2015simultaneous}.
However, they resolved it by assuming a perfect IT in the system, which is an unrealistic assumption. 
The proposed approach overcomes this drawback by using an RQM.


RQMs are very high quality but very expensive ITs that are often placed at strategic locations in the system (such as large generator buses, ends of tie-lines).
They have at least $3\mathrm{x}$ better accuracies than the conventional ITs, resulting in the REs of the RQMs lying within a very narrow range centered around $1+j0$ \cite{IEEE_C57_13_2016_std_for_ITs}.
If we now assume that the VT at the $p$-end of the branch $p$-$q$ has been upgraded to an EOVT (which is one type of RQM), then we can divide each term in \eqref{eqn:SLIC_T2_homogeneous} with $\tau_{V_{pq}}$ to obtain the following equation:
\begin{equation}
\label{eqn:SLIC_T3_non_homogeneous}
         \begin{aligned}
          & (y_{pq} + b_{pq})   (V_{pq} -V_{pq_e}) - y_{pq} \frac{\tau_{V_{qp}}}{\tau_{V_{pq}}} (V_{qp} - V_{qp_e})  \\ 
          & - \frac{\tau_{I_{pq}}}{\tau_{V_{pq}}} (I_{pq}-I_{pq_e}) = 0 \\
          & (y_{pq} + b_{pq})  \frac{\tau_{V_{qp}}}{\tau_{V_{pq}}} ( V_{qp} - V_{qp_e}) - \frac{\tau_{I_{qp}}}{\tau_{V_{pq}}} (I_{qp} - I_{qp_e}) \\
          &  = y_{pq} (V_{pq}  - V_{pq_e}) .
         \end{aligned}
\end{equation}

Next, we pre-multiply \eqref{eqn:SLIC_T3_non_homogeneous} by $z_{pq} (= 1/y_{pq})$ to make the RHS of the second sub-equation parameter-independent. Finally, we pre-multiply the resulting first sub-equation of \eqref{eqn:SLIC_T3_non_homogeneous} by $W_{pq}$, where $W_{pq} = (1 + z_{pq} b_{pq})$, to get the following equation:

\begin{equation}
\label{eqn:SLIC_final_form_single_time_instant}
         \begin{aligned}
          &{W_{pq}}^2   (V_{pq} - V_{pq_e})  -  W_{pq}  \frac{\tau_{V_{qp}}}{\tau_{V_{pq}}} (V_{qp}- V_{qp_e}) \\
          &-  W_{pq} z_{pq} \frac{\tau_{I_{pq}}}{\tau_{V_{pq}}} (I_{pq} - I_{pq_e} )     =  0 \\
          & W_{pq}  \frac{\tau_{V_{qp}}}{\tau_{V_{pq}}}  (V_{qp}- V_{qp_e})\\
           &- z_{pq}  \frac{\tau_{I_{qp}}}{\tau_{V_{pq}}}  (I_{qp} - I_{qp_e})  =   (V_{pq} - V_{pq_e}) .
         \end{aligned}
\end{equation}

Eq. \eqref{eqn:SLIC_final_form_single_time_instant} describes a relation between the measurements and the parameters to be estimated for a single time-instant.
Now, by concatenating measurements from $n$ different time-instants (i.e., different operating conditions), an over-determined system of equations can be obtained as shown below:
\begin{equation}
\begin{aligned}
\label{eqn:SLIC_real_abstraction}
\begin{bmatrix}  (D+D_e)\end{bmatrix} \: \theta =\begin{bmatrix}  (c+c_e) \end{bmatrix}
 \end{aligned}
\end{equation}
where
\begin{equation}
\begin{aligned}
\label{eqn:SLIC_complex_D_for_n_samples}
 D = \begin{bmatrix}  V_{pq} (1) &  - V_{qp} (1)    &  - I_{pq} (1)  & 0 \\  0 &   V_{qp} (1)  &   0 &   -I_{qp}(1) \\   V_{pq} (2) &  - V_{qp} (2)    &  - I_{pq} (2)  & 0 \\  0 &   V_{qp} (2)  &   0 &   -I_{qp}(2) \\ \vdots &   \vdots    &   \vdots  &  \vdots \\ V_{pq} (n) &  - V_{qp} (n)    &  - I_{pq} (n)  & 0 \\  0 &   V_{qp} (n)  &   0 &   -I_{qp}(n) \\ \end{bmatrix} \\ 
 \end{aligned}
\end{equation}
\begin{equation}
\begin{aligned}
\label{eqn:SLIC_complex_c_for_n_samples}
c = \begin{bmatrix}  0   &    V_{pq} (1) &   0   &    V_{pq} (2) & \dots & 0   &    V_{pq} (n) \end{bmatrix}^\intercal 
 \end{aligned}
\end{equation}
\begin{equation}
\label{SLIC:theta_parameters_Expression}
    \begin{aligned}
     \theta = \begin{bmatrix}  {W_{pq}}^2   &    W_{pq}  \frac{\tau_{V_{qp}}}{\tau_{V_{pq}}} &  W_{pq} z_{pq}  \frac{\tau_{I_{pq}}}{\tau_{V_{pq}}} &  z_{pq} \frac{\tau_{I_{qp}}}{\tau_{V_{pq}}} \end{bmatrix}^\intercal
    \end{aligned}
\end{equation}
and $D_e$ and $c_e$ are the noise in the corresponding elements of $D$ and $c$, respectively.
Note that the elements in \eqref{eqn:SLIC_real_abstraction}-\eqref{SLIC:theta_parameters_Expression} are in the complex domain. When expressed in Cartesian coordinates, $D_e$ and $c_e$ (which correspond to noises added by PMUs) will follow a zero-mean Gaussian distribution \cite{wang2019transmission}.
Moreover, the presence of noise in both the dependent variable ($c$) and the independent variable ($D$) categorizes \eqref{eqn:SLIC_real_abstraction} as an \textit{errors-in-variables} (EIV) estimation problem. 
The solution to a linear EIV problem with Gaussian noise can be found using a statistical method called \textit{total least squares} (TLS) \cite{markovsky2007overview}.

Despite the solution of \eqref{eqn:SLIC_real_abstraction} yielding \textit{eight} estimates in the real domain, we cannot use it to uniquely identify the line parameters and correction factor ratios of branch $p$-$q$. This is explained as follows: 
Let the IT correction factor ratios be denoted by $\Gamma_{pq}$, i.e., $\Gamma_{pq} = [\frac{\tau_{V_{qp}}}{\tau_{V_{pq}}} , \frac{ \tau_{I_{pq}}}{\tau_{V_{pq}}} , \frac{ \tau_{I_{qp}}}{\tau_{V_{pq}}} ]$. Then, by definition, $\Gamma_{pq} \in \mathbb{C}^{3 \times 1}$.
Now, since $\eta_{pq} \in \mathbb{R}^{3 \times 1}$ (see first paragraph of Section \ref{BasicForm}), we have \textit{nine} 
unknown parameters to be estimated in the real domain (i.e., the problem is under-determined). In the next sub-section, we propose a quantization procedure to address the problem of uniquely identifying the line parameters and correction factor ratios from \eqref{eqn:SLIC_real_abstraction}-\eqref{SLIC:theta_parameters_Expression}. 
As the focus in this section is on a single branch, we refer to the problem as the individual branch (IB)-SLIC problem.

\subsection{Quantization Procedure for Solving the IB-SLIC Problem}
\label{QP4IBSLIC}

The proposed quantization procedure makes use
of the fact that the first element of $\theta$ in \eqref{SLIC:theta_parameters_Expression}, namely, $\theta_1 = {W_{pq}}^2$, is in-dependent of $\Gamma$.
Hence, if $\hat{\eta}_{pq}$ can be obtained from $\hat{\theta}_1$, then the correction factor ratios can be estimated from the other terms of $\hat{\theta}$.
To obtain $\hat{\eta}_{pq}$ (from $\hat{\theta}_1$), we exploit how line parameters vary in an actual power system, as explained below.

Particularly for the BPS, which is the focus of this paper, power utilities maintain a database that contains historical values of the line parameters. Although the parameters change over time, they lie within $\pm30\%$ of their database values \cite{kusic2004measurement}, implying that the variations are bounded.
Furthermore, the variations are also similar \cite{cecchi2011incorporating}.
In such a scenario, we `bin' the variations as a function of the database value in the following way: Define 
$r_{pq}(m) = r_{\dagger} (1 + \delta_{r} m)$, $x_{pq}(m) = x_{\dagger} (1 + \delta_{x} m)$,  $b_{pq}(m) = b_{\dagger} (1 + \delta_{b} m)$, and $\eta_{\dagger}=[r_{\dagger},x_{\dagger},b_{\dagger}]$, where the symbol $\dagger$ in the subscript indicates database value of the corresponding line parameter of branch $p$-$q$, $\delta$ denotes the normalized quantization step-size for the corresponding parameter of branch $p$-$q$, and $m$ is the bin number. The bounds are enforced by ensuring $\delta m \in [-0.3,0.3]$.
Once the bins are created, we define $W_{pq}(m)$ as a function of $r_{pq}(m)$, $x_{pq}(m)$, and $b_{pq}(m)$, as shown below:
\begin{equation}
\begin{aligned}
\label{f_W_definition}
W_{pq}(m) &= f_W(r_{pq}(m), x_{pq}(m), b_{pq}(m)) \\ 
& = 1+(r_{pq}(m)+jx_{pq}(m)) b_{pq}(m).
\end{aligned}
\end{equation}

The estimate of $W_{pq}$, $\widehat{W}_{pq}$, is already known from $\hat{\theta}_1$ by solving \eqref{eqn:SLIC_real_abstraction} using TLS. The optimal bin index ($m^*$), and thereby, $\hat{\eta}_{pq}$, are now found by determining the $W_{pq}(m)$ that is closest to $\widehat{W}_{pq}$ in the least-squares sense.
This quantization procedure is mathematically explained in Algorithm \ref{alg:SLIC_Algo_0}.

The proposed quantization procedure ensures tractability of the IB-SLIC problem, while being subject to the accuracy of the TLS solution. 
The procedure does introduce a trade-off between accuracy and computational burden; however, this can be controlled by modulating the quantization step-size.
A more crucial aspect is the guarantee of the unique identification of the line parameters; we prove this through Theorem \ref{Theorem_1} below.

\begin{algorithm}[H]
\caption{Quantization Procedure (QP)} \label{alg:SLIC_Algo_0}
\begin{algorithmic}[1] 
\State \textbf{Input: }  $D, c, \eta_{\dagger}, p, q$
\State \textbf{Output: }  $\hat{\eta}_{pq}$
\Procedure{QP }{$D, c, \eta_{\dagger}, p, q$}
    \State $\hat{\theta} = \text{TLS}(D, c)$
    \State $ \widehat{W}_{pq} = \sqrt{\hat{\theta}_1}$
    \State Create bins using $\eta_{\dagger}$:
    \State \quad $ r_{pq}(m) = r_{\dagger} (1+ \delta_{r} m)$ 
    \State \quad $ x_{pq}(m) = x_{\dagger} (1+ \delta_{x} m)$ 
    \State \quad $ b_{pq}(m) = b_{\dagger} (1+ \delta_{b} m)$, and 
    \State \quad $W_{pq}(m) =  f_W (r_{pq}(m), x_{pq}(m), b_{pq}(m))$
    \State Optimal bin index \mbox{$ m^* = \arg\underset{m }\min \| \widehat{W}_{pq} - W_{pq}(m)\|_2^2$}
    \State Assign the line parameter values based on the optimal bin index as
    \State \quad $\hat{r}_{pq} = r_{pq}(m^*)$
    \State \quad $\hat{x}_{pq} = x_{pq}(m^*)$
    \State \quad $\hat{b}_{pq} = b_{pq}(m^*)$
    \State \quad $\hat{\eta}_{pq} = [\hat{r}_{pq}, \hat{x}_{pq}, \hat{b}_{pq}]$
    \State \textbf{return} $\hat{\eta}_{pq}$
\EndProcedure
\end{algorithmic}
\end{algorithm}


\begin{theorem}
If $m_1$ and $m_2$ denote two bin numbers, then
$f_W(m_1) = f_W(m_2) \iff m_1 = m_2$.
\label{Theorem_1}
\end{theorem}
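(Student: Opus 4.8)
The statement asserts that the map $m \mapsto f_W(m)$ is injective; the reverse implication is immediate, since $f_W$ is a well-defined function of the bin number, so the content lies entirely in the forward direction. The plan is to establish injectivity by writing $f_W$ out explicitly as a function of $m$ and then ruling out coincidental collisions using the $\pm 30\%$ bound enforced during binning.

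First I would substitute the binning rules $r_{pq}(m) = r_\dagger(1+\delta_r m)$, $x_{pq}(m) = x_\dagger(1+\delta_x m)$, and $b_{pq}(m) = b_\dagger(1+\delta_b m)$ into \eqref{f_W_definition} and collect powers of $m$. Because $f_W$ multiplies two factors that are each affine in $m$, the result is a complex-valued quadratic $f_W(m) = A m^2 + B m + C$ whose coefficients depend only on $\eta_\dagger$ and the step-sizes. In the natural case of a common step-size $\delta_r=\delta_x=\delta_b=\delta$ (consistent with the assumption that the variations are similar), this collapses to the transparent form $f_W(m) = 1 + z_\dagger b_\dagger (1+\delta m)^2$, where $z_\dagger = r_\dagger + j x_\dagger$.

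Next I would set $f_W(m_1)=f_W(m_2)$ and subtract; the constant cancels and the difference factors as $(m_1-m_2)[A(m_1+m_2)+B]=0$, i.e. $z_\dagger b_\dagger[(1+\delta m_1)^2-(1+\delta m_2)^2]=0$ in the common-step form. Since a physical branch has nonzero series impedance and shunt susceptance, $A\neq 0$ (equivalently $z_\dagger b_\dagger \neq 0$), so either $m_1=m_2$---the desired conclusion---or the spurious root characterized by $m_1+m_2=-B/A$, equivalently $1+\delta m_1 = -(1+\delta m_2)$.

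The hard part, and the single place where the physics enters, is eliminating this spurious root using the bound $\delta m \in [-0.3,0.3]$. In the common-step form it is clean: the bound forces $1+\delta m \in [0.7,1.3]>0$, so both square roots are strictly positive, and their squares can agree only if $1+\delta m_1 = 1+\delta m_2$, i.e. $m_1=m_2$; equivalently, $f_W$ is the composition of the injective affine map $w\mapsto 1+z_\dagger b_\dagger w$ with the strictly monotone map $m\mapsto(1+\delta m)^2$ on the admissible interval. For unequal step-sizes the same conclusion holds but needs one extra observation: the spurious value $-B/A$ either carries a nonzero imaginary part (so it cannot equal the real quantity $m_1+m_2$), or, when it is real, its magnitude $1/\delta_r+1/\delta_b$ far exceeds the admissible range $|m_1+m_2|\le 0.6/\max(\delta_r,\delta_x,\delta_b)$ implied by the $\pm30\%$ bound. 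In every case the $\pm30\%$ physical bound is precisely what kills the second root and secures injectivity.
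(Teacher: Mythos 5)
Your proof is correct, and it diverges from the paper's at exactly one step---the elimination of the spurious root---where your argument is in fact stronger. Both proofs rest on the same factorization: subtracting cancels the constant term and leaves $(m_1-m_2)$ times a residual factor. The paper gets there by splitting $f_W(m_1)=f_W(m_2)$ into real and imaginary parts, producing $(m_1-m_2)\bigl[\delta_x+\delta_b+\delta_x\delta_b(m_1+m_2)\bigr]=0$ and $(m_1-m_2)\bigl[\delta_r+\delta_b+\delta_r\delta_b(m_1+m_2)\bigr]=0$, and reduces the bad scenario to $\delta_r=\delta_x$ together with $u_1+u_2=-(\delta_r+\delta_b)/\delta_r$ where $u_i=\delta_b m_i$; this is precisely your dichotomy, since your ratio $-B/A$ is real exactly when $\delta_r=\delta_x$ (given $r_\dagger, x_\dagger, b_\dagger \neq 0$) and then equals $-(1/\delta_r+1/\delta_b)$. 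The difference is how that last condition is disposed of. The paper argues probabilistically: the probability that $u_1+u_2$ hits one prescribed real number is zero, so the scenario occurs ``almost never.'' You instead observe that the prescribed value is unreachable outright: the $\pm 30\%$ bound forces $|u_1+u_2|\le 0.6$, while $\bigl|-(\delta_r+\delta_b)/\delta_r\bigr|=1+\delta_b/\delta_r>1$ (equivalently, in your normalization, $1/\delta_r+1/\delta_b>0.6/\max(\delta_r,\delta_x,\delta_b)\ge|m_1+m_2|$). Your exclusion is deterministic, so you prove the unconditional ``if and only if'' as stated in the theorem, whereas the paper's measure-zero argument yields only ``almost never''---a weaker conclusion that, as your bound shows, is also unnecessary, because the constraint the paper already imposes rules out the spurious root entirely. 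Your complex-quadratic packaging (one complex equation instead of two real ones) is a cosmetic difference; the deterministic interval argument is the substantive one, and it is the better route.
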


\begin{proof}
Please see Appendix \ref{appendix1} for the proof.
\end{proof}

Once the line parameter estimates for the line $p$-$q$ (namely, $\hat{\eta}_{pq}$) are found using the quantization procedure, they can be used to estimate the corresponding IT correction factor ratios (namely, $\hat{\Gamma}_{pq}$), as shown below,
\begin{equation}
\label{CFR-back calculation}
\begin{aligned}
\frac{\hat{\tau}_{V_{qp}}}{\tau_{V_{pq}}} =  \frac{ \hat{\theta}_2}{\hat{W}_{pq}}, \:
\frac{\hat{\tau}_{I_{pq}}}{\tau_{V_{pq}}} =  \frac{ \hat{\theta}_3}{ \hat{W}_{pq} \hat{z}_{pq}}, \:
\frac{\hat{\tau}_{I_{qp}}}{\tau_{V_{pq}}} =\frac{ \hat{\theta}_4 }{ \hat{z}_{pq}  }.
    \end{aligned}
\end{equation}


The overall procedure for solving the IB-SLIC
problem is now summarized in Algorithm \ref{alg:SLIC_Algo_1}.
The algorithm takes the `from' and `to' end bus information ($p$ and $q$), the noisy voltage and current phasor measurements ($V_{pq}, V_{qp}, I_{pq}, I_{qp}$), and the database values of the line parameters ($\eta_{\dagger}$) as inputs.
First, $D$ and $c$ are obtained for the $p$-$q$ branch using \eqref{eqn:SLIC_complex_D_for_n_samples} and \eqref{eqn:SLIC_complex_c_for_n_samples}. Then, using TLS, the optimal parameter estimate, $\hat{\theta}$, is found. Finally, for obtaining $\hat{\eta}_{pq}$ and $\hat{\Gamma}_{pq}$ from $\hat{\theta}$, the quantization procedure (Algorithm \ref{alg:SLIC_Algo_0}) and \eqref{CFR-back calculation} are utilized.

\begin{algorithm}[H]
\caption{Individual Branch (IB)-SLIC} \label{alg:SLIC_Algo_1}
\begin{algorithmic}[1] 
\State \textbf{Input: }  $p, q, V_{pq}, V_{qp}, I_{pq}, I_{qp}, \eta_{\dagger}$
\State \textbf{Output: }  $\hat{\eta}_{pq}$, $\hat{\Gamma}_{pq}$
\Procedure{IB-SLIC }{$p, q, V_{pq}, V_{qp}, I_{pq}, I_{qp}, \eta_{\dagger}$}
    \State Create $D$ using \eqref{eqn:SLIC_complex_D_for_n_samples}
    \State Create $c$ using \eqref{eqn:SLIC_complex_c_for_n_samples}
    \State Calculate $\hat{\eta}_{pq}$ using Algorithm \ref{alg:SLIC_Algo_0} 
    \State Calculate $\hat{\Gamma}_{pq}$ using \eqref{CFR-back calculation} and $\hat{\eta}_{pq}$ 
     \State \textbf{return}  $\hat{\eta}_{pq}$, $\hat{\Gamma}_{pq}$
\EndProcedure
\end{algorithmic}
\end{algorithm}

Algorithm \ref{alg:SLIC_Algo_1} estimates line parameters ($\eta_{pq}$) and correction factor ratios ($\Gamma_{pq}$) for branch $p$-$q$ in the network.
However, the eventual goal is to estimate the IT correction factors (henceforth, denoted by $\mathcal{T}_{pq}$, where $\mathcal{T}_{pq} = [\tau_{V_{pq}}, \tau_{V_{qp}} \tau_{I_{pq}}, \tau_{I_{qp}}]$).
Now, if the $p$-end of this branch has an EOVT, then 
we can solve \eqref{eqn:SLIC_real_abstraction} a large number of times (say, $M$), and take the average to calculate $\hat{\tau}_{V_{qp}}$, $\hat{\tau}_{I_{pq}}$, and $\hat{\tau}_{I_{qp}}$ as shown below:
%
\begin{equation}
\label{eqn:kappa_from_rho}
         \begin{aligned}
    \hat{\tau}_{V_{qp}}   &= \frac{1}{M} \sum_{j=1}^M \left( \frac{\hat{\tau}_{V_{qp}}}{\tau_{V_{pq}}}\right)_j , \:
    \hat{\tau}_{I_{pq}}   = \frac{1}{M}  \sum_{j=1}^M \left( \frac{\hat{\tau}_{I_{pq}}}{\tau_{V_{pq}}} \right)_j,\\
 \hat{\tau}_{I_{qp}}    &= \frac{1}{M}  \sum_{j=1}^M \left( \frac{\hat{\tau}_{I_{qp}}}{\tau_{V_{pq}}} \right)_j .\\   
         \end{aligned}
\end{equation}

One limitation of Algorithm \ref{alg:SLIC_Algo_1} is that it will give good estimates only when a RQM is present at one end of the branch. However, since RQMs are very expensive, they cannot be placed at one end of every branch of the connected tree. 
A strategy to solve the SLIC problem for multiple branches using a single RQM is described in the next section.


\section{Solving the System-wide SLIC Problem}
\label{SLIC Modeling - HV Network Extension}



\subsection{Extending IB-SLIC to Multiple Branches}
When a substation is selected for PMU placement, it is usually ensured that 
all the branches coming out of that substation are monitored by PMUs \cite{pal2016pmu}. In such a scenario, one often ends up getting multiple looks at the bus voltage
from PMUs that have different instrumentation infrastructure. This is very valuable in the context of the SLIC problem because \textit{the true voltage of the bus remains the same irrespective of which PMU is used to observe it}. 
We first explain the role of multiple independent observations of the bus voltage
in doing SLIC for a branch that is next 
to a branch that has an RQM. 
Then, we generalize it to all branches of the network for which SLIC must be performed.
Note that throughout this analysis, we assume a connected tree, $\mathcal{L}$, is present via which one can reach other 
branches from the branch that has the RQM.


Consider a branch $q$-$r$ that is next to the branch $p$-$q$ which has an RQM at the $p$-end.
Let the true voltage of bus $q$ be $V_q^*$. Since the bus $q$ is at the `to' end of branch $p$-$q$ and also `from' end of branch $q$-$r$, by using \eqref{eqn:Relation after IT - true_Ks}, $V_q^*$ can be expressed in two different ways:
\begin{equation}
\label{eqn:Vq-redundant}
         \begin{aligned}
        V_q^* = \tau_{V_{qp}} (V_{qp} - V_{qp_e})  \\
        V_q^* = \tau_{V_{qr}} (V_{qr} - V_{qr_e}).  \\
         \end{aligned}
\end{equation}

A variable $\rho_q$ is now defined as shown below.
\begin{equation}
\label{eqn:gamma_definition}
         \begin{aligned}
        \rho_q = \frac{(V_{qp} - V_{qp_e})}{(V_{qr} - V_{qr_e})} = \frac{\tau_{V_{qr}}}{\tau_{V_{qp}}}. \\
         \end{aligned}
\end{equation}

It is clear from \eqref{eqn:gamma_definition} that $\rho_q$ is the ratio of the correction factors of the VTs located on either side of bus $q$.
An estimate of $\rho_q$ (denoted by $\hat{\rho}_{q}$) can be obtained offline from $N$ noisy phasor measurements as shown below:
\begin{equation}
\label{eqn:gamma_estimate}
         \begin{aligned}
       \hat{\rho}_{q} = \frac{1}{N} \sum\limits_{j=1}^N \frac{V_{qp}(j)}{V_{qr}(j)}.
         \end{aligned}
\end{equation}

In \eqref{eqn:gamma_estimate}, we take advantage of the fact that noise added by the PMU device is small and has a zero-mean Gaussian distribution.
Now, $\hat{\rho}_{q}$ can be calculated using \eqref{eqn:gamma_estimate} $\forall{j}$, $j \in [1,M]$. Similarly, the correction factor ratios can also be estimated and saved $\forall{j}$, $j \in [1,M]$.
Lastly, by using $\hat{\rho}_{q_j}$ and the corresponding correction factor ratios, the correction factor estimate relation can be expressed as shown below.
\begin{equation}
\label{eqn:gamma_transfer_1_hop}
         \begin{aligned}
    \hat{\tau}_{V_{qr}}  &= \frac{1}{M} \sum_{j=1}^M \left( \hat{\rho}_{q_j} \left( \frac{\tau_{V_{qp}}}{\tau_{V_{pq}}} \right)_j  \right)\\
    \hat{\tau}_{V_{rq}}  &=    \frac{1}{M} \sum_{j=1}^M \left( \hat{\rho}_{q_j} \left( \frac{\tau_{V_{qp}}}{\tau_{V_{pq}}}  \right)_j \left(  \frac{\tau_{V_{rq}}}{\tau_{V_{qr}}} \right)_j  \right)\\
        \hat{\tau}_{I_{qr}}&=   \frac{1}{M} \sum_{j=1}^M \left( \hat{\rho}_{q_j} \left( \frac{\tau_{V_{qp}}}{\tau_{V_{pq}}}  \right)_j \left( \frac{\tau_{I_{qr}}}{\tau_{V_{qr}}}  \right)_j  \right)\\
     \hat{\tau}_{V_{rq}}   &=    \frac{1}{M} \sum_{j=1}^M \left(\hat{\rho}_{q_j} \left( \frac{\tau_{V_{qp}}}{\tau_{V_{pq}}} \right)_j \left( \frac{\tau_{I_{rq}}}{\tau_{V_{qr}}}  \right)_j  \right).\\
         \end{aligned}
\end{equation}

Note that apart from $\hat{\rho}_{q_j}$, the other terms in the RHS of \eqref{eqn:gamma_transfer_1_hop} are obtained using Algorithm \ref{alg:SLIC_Algo_1}.
It is also clear from \eqref{eqn:gamma_transfer_1_hop} that by using multiple independent voltage measurements
we can get a very good estimate of all the correction factors of the branch $q$-$r$ that is directly connected to the branch $p$-$q$ which has the RQM at the $p$-end.

Next, we generalize this property to any branch that connects to the RQM branch 
through other non-RQM branches. 
For ease of generalization, let the branches be named $ (1,2), (2,3), (3,4), \dots,  (u-1, u), (u, u+1)$. 
Furthermore, let the RQM be placed at the $1$-end of branch $(1,2)$, and the branch for which SLIC must be performed, be $(u,u+1)$, which is labeled as ``Current branch" in Fig. \ref{RQM_to_current_network_actiev_path}.
The black lines in the figure depict the connected tree, while the branches that do not belong to the connected tree are shown in grey.

\begin{figure}[H]
            \centering
            \includegraphics[width=0.45\textwidth]{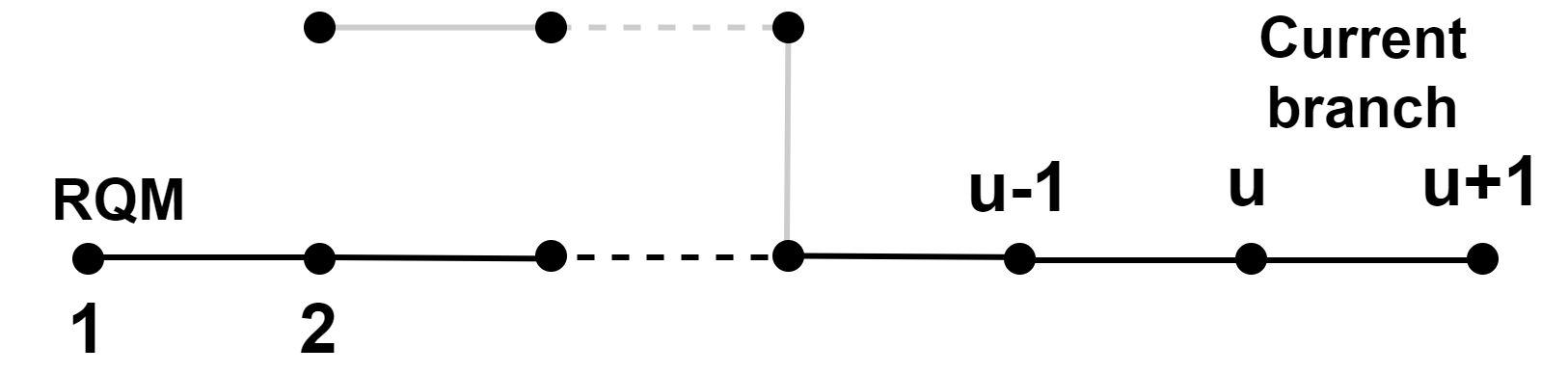}
            \caption{\textcolor{black}{HV branches from RQM branch to Current branch}}%
            \label{RQM_to_current_network_actiev_path}
\end{figure}

For the branch $(u,u+1)$, the relation of the correction factor ratios to the RQM located at the $1$-end of branch $(1,2)$
is computed in the following way.
First, compute $\Lambda$ as:
\begin{equation}
\label{Lambda_for_generic_kappa}
\begin{aligned}
\Lambda &= \frac{1}{M} \sum_{j=1}^M \left( \hat{\rho}_{u_j} \frac{\hat{\tau}_{V_{(u,u-1)}}}{\tau_{V_{(u-1,u)}}}   \hat{\rho}_{(u-1)} \dots  \frac{\hat{\tau}_{V_{(3,2)}}}{\tau_{V_{(2,3)}}}   \hat{\rho}_2 \frac{\hat{\tau}_{V_{(2,1)}}}{\tau_{V_{(1,2)}}} \right)  \\
&= \frac{1}{M} \sum_{j=1}^M \left( \left( \prod_{k=2}^{(u)} \hat{\rho}_{k_j} \right) \left( \prod_{(u, u+1)=(1,2)}^{((u-1), u))}  \frac{\tau_{V_{(u+1,u)}}}{\tau_{V_{(u,u+1)}}}  \right) \right).
\end{aligned}
\end{equation}

Then, the correction factors for $(u,u+1)$ is computed using $\Lambda$ and the correction factor ratios of every line segment in between $(u,u+1)$
and the RQM branch as shown below.
\begin{equation}
\label{Generic_kappa_Estimation_HV_network}
\begin{aligned}
&\hat{\tau}_{V_{(u,u+1)}} = \Lambda , \:\:
&\hat{\tau}_{V_{(u+1,u)}} = \Lambda  \: \frac{\hat{\tau}_{V_{(u+1,u)}}}{\tau_{V_{(u,u+1)}}} \\ 
&\hat{\tau}_{I_{(u,u+1)}} = \Lambda \: \frac{\hat{\tau}_{I_{(u,u+1)}}}{\tau_{V_{(u,u+1)}}}, \:\: 
&\hat{\tau}_{I_{(u+1,u)}} = \Lambda \: \frac{\hat{\tau}_{I_{(u+1,u)}}}{\tau_{V_{(u,u+1)}}}  . \\
\end{aligned}
\end{equation}

From \eqref{Generic_kappa_Estimation_HV_network}, it can be realized that by leveraging the multiple independent observations
of the bus voltages, the formulation described above is able to estimate $\mathcal{T}_{pq}, \forall (p,q) \in \mathcal{L}$
using an RQM placed at one end of
a branch.

In summary, using IB-SLIC (Algorithm \ref{alg:SLIC_Algo_1}) and multiple independent voltage measurements, we can estimate line parameters and calibrate ITs of the entire connected tree. 
The only requirements for the proposed approach are the presence of PMUs at both ends of the lines, and one RQM at the end of one of the lines. 
To reach the other lines from the line that has the RQM, we use depth-first search (DFS). 
The overall implementation of the proposed formulation for system-wide SLIC 
is described in Algorithm \ref{alg:SCALP_Algo_2}, where $\mathcal{B}$, $V_{\mathcal{L}}$, and $I_{\mathcal{L}}$ denote the buses, voltage phasor measurements, and current phasor measurements, respectively, of the connected tree ($\mathcal{L}$).

\begin{algorithm}
\caption{System-wide-SLIC (SW-SLIC)}\label{alg:SCALP_Algo_2}
\begin{algorithmic}[1]
\State \textbf{Input: } $\mathcal{B}, \mathcal{L}, V_{\mathcal{L}}, I_{\mathcal{L}}, \eta_{\dagger} $
\State \textbf{Output: } $\hat{\eta}, \hat{\Gamma}, \hat{\mathcal{T}}$
\Procedure{Path Finder}{$node, goal, path, visited$}
    \If {$node = goal$}
        \State $path.append(node)$
        \State \textbf{return} $path$
    \EndIf
    \State $visited[node] \gets True$
    \State $path.append(node)$
    \ForAll {$neighbor \in neighbors(node)$}
        \If {$visited[neighbor] = False$}
            \State result $\gets$ \Call{DFS}{$neighbor, goal, path, visited$}
            \If {$result \neq None$}
                \State \textbf{return} $result$
            \EndIf
        \EndIf
    \EndFor
    \State $path.pop()$  
    \State \textbf{return} $None$
\EndProcedure

\Procedure{SW-SLIC}{$\mathcal{B}, \mathcal{L}, V_{\mathcal{L}}, I_{\mathcal{L}}, \eta_{\dagger} $}
    \ForAll {$bus \in \mathcal{B}$}
        \State $path $ = [], $visited$ = []
        \State $\mathrm{PT}$ = \Call{Path Finder}{RQM, $bus$, $path$, $visited$}
        \State $[\hat{\eta}_{pq},  \hat{\Gamma}_{pq}]$  = IB-SLIC($p, q$, $V_{pq}, V_{qp}$, $I_{pq}$, $I_{qp}$, $\eta_{\dagger}$) 
        \State Calculate $\Lambda$ using  \eqref{Lambda_for_generic_kappa}, $\hat{\Gamma}_{pq}$, and $\mathrm{PT}$
        \State Estimate $\mathcal{T}_{pq}$ using \eqref{Generic_kappa_Estimation_HV_network}, $\Lambda$, and $\hat{\Gamma}_{pq}$ 
    \EndFor
    \State \textbf{return} $\hat{\eta}, \hat{\Gamma}, \hat{\mathcal{T}}$
\EndProcedure
\end{algorithmic}
\end{algorithm}


Algorithm \ref{alg:SCALP_Algo_2} comprises two main components. The initial component, referred to as the PATH FINDER function, is predicated on the DFS algorithm. This function can determine the connected tree
from the RQM to any specified bus when initialized with the RQM's position as its primary argument, while both the `path' and `visited' parameters are set to null. Alongside Algorithm \ref{alg:SLIC_Algo_1}, the PATH FINDER is instrumental in delineating the SW-SLIC process. This process, starting from the RQM bus, systematically traverses to each bus, identifies the next branch to perform the analysis, and employs the PATH FINDER to determine the connected tree
from the RQM to the current bus. Subsequently, the IB-SLIC is performed to derive the branch-specific parameters $\hat{\eta}_{pq}$ and $\hat{\Gamma}_{pq}$. The final stage involves estimating the correction factor, utilizing the connected tree
sourced from the PATH FINDER, the $\hat{\Gamma}_{pq}$ obtained through IB-SLIC, and \eqref{Lambda_for_generic_kappa}-\eqref{Generic_kappa_Estimation_HV_network}. This methodology is iteratively applied to each bus via a $\mathrm{for~loop}$
to ensure system-wide SLIC is performed $\forall (p,q) \in \mathcal{L}$.

\subsection{Optimal Location for the RQM}
\label{OptLoc_new_RQM_SLIC}

The presence of an RQM is crucial for the implementation of the proposed solution for the SLIC problem. 
If a connected tree already has an RQM, then one can carry out the procedure described in Algorithm \ref{alg:SCALP_Algo_2} in the previous sub-section. 
However, if a connected tree does not have an RQM, then one must first determine a suitable location for placing it.
This suitable location is found by the proposed new RQM placement algorithm (Algorithm \ref{alg:SCALP_Algo_3}), which is formulated on top of the SW-SLIC algorithm.
Note that Algorithm \ref{alg:SCALP_Algo_3} only helps satisfy
the needs of the SLIC problem and does not consider other benefits that an RQM might provide. 

In Algorithm \ref{alg:SCALP_Algo_3}, the SW-SLIC is repeated for all possible combinations of RQM locations. The first $\mathrm{for~loop}$ conducts the SW-SLIC keeping the RQM at the `from' end bus of all the branches (defined as $\mathcal{B}_f$) in $\mathcal{L}$. The second $\mathrm{for~loop}$ repeats the SW-SLIC keeping RQM at the `to' end of all the branches (defined as $\mathcal{B}_t$).
Both loops save the performance index $\mu_{\text{MARE}}$ in a variable, $\mathcal{E}$, where the subscript $\mathrm{MARE}$ indicates \textit{mean absolute relative error}.
Particularly, $\mu_{\text{MARE}}$ represents the average value calculated from a large number of Monte Carlo (MC) simulations, each of which computes the $\mathrm{MARE}$ value
for both the line parameters as well as the IT correction factors.
The location with the minimum value of $\mathcal{E}$ is chosen for RQM placement.

\begin{algorithm}
\caption{New RQM Placement}\label{alg:SCALP_Algo_3}
\begin{algorithmic}[1]
\State \textbf{Input: } $\mathcal{B}, \mathcal{L}, \mathcal{B}_f, \mathcal{B}_t, V_{\mathcal{L}}, I_{\mathcal{L}}, \eta_{\dagger}$
\State \textbf{Output: } OptLoc
\State Initialize $\mathcal{E} = \vec{0}$
\For{$k \in \mathcal{B}_f$}
\State $[\hat{\eta},  \hat{\Gamma}]$  = SW-SLIC( $\mathcal{B}_f, \mathcal{L}, V_{\mathcal{L}}, I_{\mathcal{L}}, \eta_{\dagger}$) 
\State Calculate  $\mu_{\text{MARE}}$
\State $\mathcal{E} (k) =\mu_{\text{MARE}}$
\EndFor
\For{$k \in \mathcal{B}_t$}
\State $[\hat{\eta},  \hat{\Gamma}]$  = SW-SLIC( $\mathcal{B}_t, \mathcal{L}, V_{\mathcal{L}}, I_{\mathcal{L}}, \eta_{\dagger}$) 
\State Calculate  $\mu_{\text{MARE}}$
\State $\mathcal{E} (k+|\mathcal{L}|) =\mu_{\text{MARE}}$
\EndFor
\State OptLoc = $\arg\underset{i }\min~ \mathcal{E} (i)$
\end{algorithmic}
\end{algorithm}

Note that Algorithm \ref{alg:SCALP_Algo_3} is built upon the SW-SLIC algorithm which makes use of DFS. 
Therefore, a \textit{centrally} located RQM is expected to improve the accuracy of IT correction factors as it will ensure
traversal across minimum number of hops to reach all the buses in $\mathcal{L}$ from the RQM bus.
This intuition is corroborated in the next section, in which 
we evaluate the performance of the proposed approach for solving the system-wide SLIC problem for different power system conditions.


\section{Results}
\label{Results}

The proposed methodology can be applied to any system as long as a \textit{connected tree} is present.
However, since PMUs are usually placed on the highest voltage (HV) buses first \cite{varghese2023timesynchronized}, we selected the 345 kV network of the IEEE 118-bus system, which has 11 buses 
connected by 10 branches.
For the U.S. power utility, the analysis was carried out on four buses connected by three branches.
All simulations were performed 
on a computer having 64 GB RAM with an Intel 11th Gen Intel(R) Core(TM) i7 processor @3.00 GHz.

For the 118-bus system, we simulated the morning-load pickup \cite{gao2012dynamic1} in MATPOWER.
For diverse loading conditions, we calculated active and reactive power injections, and then solved the power flow to obtain the corresponding voltage and current phasor measurements. 
This procedure was repeated after introducing bounded perturbations ($\leq\!\!30\%$) in the line parameters.
The measurements thus generated were the true phasor values.
Next, REs of normal quality ITs (accuracy class of 0.6 \cite{IEEE_C57_13_2016_std_for_ITs}) and RQMs (accuracy class of 0.15 \cite{IEEE_C57_13_2016_std_for_ITs}) were multiplied with the true phasors to mimic the outputs of non-ideal ITs.
Finally, a zero-mean Gaussian noise with a standard deviation of 0.03\% was added to replicate noisy PMU outputs from uncalibrated ITs.
The resulting dataset was fed into the proposed approach for solving the SLIC problem.
The proposed approach first employed TLS to solve \eqref{eqn:SLIC_real_abstraction} and obtain $\hat{\theta}$.
Next, $\hat{\eta}$ and $\hat{\rho}$ for every branch in the network are obtained using IB-SLIC (Algorithm \ref{alg:SLIC_Algo_1}). 
Finally, SW-SLIC (Algorithm \ref{alg:SCALP_Algo_2}) is utilized to estimate all the line parameters and IT correction factors in the connected tree.
To evaluate the performance of the proposed approach, the $\mathrm{ARE}$ index is chosen as the performance metric.
For variables that are in the complex domain ($\Gamma$ and $\mathcal{T}$), absolute value of $\mathrm{ARE}$ is 
displayed for sake of brevity (the inferences drawn from the angle comparisons are similar).
To compare statistics across $M$ MC runs, the mean of $\mathrm{ARE}$, denoted by $\mathrm{MARE}$, is computed. 
The results obtained are summarized below.

\vspace{-1em}

\subsection{Validation of Proposed Solution to SLIC Problem}
\label{SLIC - Performance Analysis - HV Network results}

For the analysis done in this sub-section, the RQM is assumed to be at $10$-end of branch $9$-$10$.
The proposed approach is repeated $M=1000$ times, and the number of bins are set to $61$ (= $-0.3$ to $+0.3$ in steps of $0.01$).
The $\mathrm{ARE}$ of the estimates is saved after every MC simulation. 
The $\mathrm{MARE}$ across all the MC simulations is displayed in Tables \ref{LPE_ARE_table_non_optimal_RQM}-\ref{CF_ARE_table_non_optimal_RQM}.

Table \ref{LPE_ARE_table_non_optimal_RQM} presents $\mathrm{MARE}$s of $r$, $x$, and $b$ for the 10 branches after applying
Algorithm \ref{alg:SCALP_Algo_2}. The generally small $\mathrm{MARE}$ values ($<1\%$) confirm the excellent estimation accuracy of the proposed approach.
Moreover, it is evident from the table  that despite variations in the operating conditions and the branches' distances from the RQM branch, the accuracy of the estimates remained sufficiently high.
For the two branches where the 
estimation error surpassed $1\%$ ($L_{65-68}$ and $L_{68-81}$), further analysis was conducted, which
revealed that 
bins adjacent to the actual bin were 
selected as the optimal bin on many occasions by Algorithm \ref{alg:SLIC_Algo_0}. This occurred due to a combination of two factors: (i) additive noise introduced by the PMU, and (ii) value of $W_{pq}$ across all $61$ bins lying within a very narrow range (numerically).
The combined effect was that even a small error in the TLS estimate resulted in 
a higher error in the line parameter estimates for these two branches.

\begin{table}[b]
\caption{\% $\mathrm{MARE}$s for Line Parameters}
\label{LPE_ARE_table_non_optimal_RQM}
\fontsize{9}{9}
\centering
\begin{tabular}{|l|c|c|c|c|c|}
\hline
Branch    & $L_{30-38}$ & $L_{38-65}$ & $L_{65-64}$ & $L_{64-63}$  & $L_{65-68}$ \\ \hline
$r_{pq}$  &    0.632   &   0.436   &   0.441         &  0.567       &   1.88      \\ \hline
$x_{pq}$  & 0.635    &  0.433    & 0.448           & 0.562        &    1.87     \\ \hline
$b_{pq}$  &  0.627    &  0.445    &  0.437          &  0.561       &   1.92      \\ \hline
Branch    & $L_{68-81}$ & $L_{30-26}$ & $L_{8-30}$ & $L_{8-9}$   & $L_{9-10}$  \\ \hline
$r_{pq}$  &  2.13    & 0.436          &   0.462    & 0.422     & 0.432      \\ \hline
$x_{pq}$  & 2.13     & 0.429         &   0.475    &   0.416   &   0.427    \\ \hline
$b_{pq}$  &  2.12    &  0.425         &     0.459  &   0.436   &  0.434     \\ \hline
\end{tabular}
\end{table}




The absolute $\mathrm{MARE}$s of the correction factor ratios obtained using the proposed approach 
are tabulated in Table \ref{CFR_ARE_table_non_optimal_RQM}.
One trend that we observe in this table is that the estimates of the VT correction factor ratios ($\frac{\hat{\tau}_{V_{qp}}}{\tau_{V_{pq}}}$) are consistently more accurate than the estimates of the two CT correction factor ratios ($\frac{\hat{\tau}_{I_{pq}}}{\tau_{V_{pq}}}$ and $\frac{\hat{\tau}_{I_{qp}}}{\tau_{V_{pq}}}$). This observation can be explained in the following way.
The value of $\widehat{W}_{pq}$ is directly obtained by solving the linear system of equations in \eqref{eqn:SLIC_real_abstraction}. Now, from \eqref{SLIC:theta_parameters_Expression}, it can be realized that the calculation of the VT correction factor ratios require only $\widehat{W}_{pq}$ and $\hat{\theta}$.
On the contrary, the CT correction factor ratios require the knowledge of line parameter estimates in addition to $\widehat{W}_{pq}$ and $\hat{\theta}$.
Since the line parameter estimates can have a small estimation error due to the quantization procedure (as observed in Table  \ref{LPE_ARE_table_non_optimal_RQM}), this error is carried over to the CT correction factor ratio estimation.
This is the reason why the VT correction factor ratios have relatively better estimation accuracies as opposed to the CT correction factor ratios.

\begin{table}[b]
\caption{\% 
Absolute 
$\mathrm{MARE}$s for Correction Factor Ratios}
\label{CFR_ARE_table_non_optimal_RQM}
\fontsize{9}{9}
\centering
\begin{tabular}{|l|c|c|c|c|c|}
\hline
Branch                                       & $L_{30-38}$   & $L_{38-65}$   & $L_{65-64}$     & $L_{64-63}$  & $L_{65-68}$ \\ \hline
$\frac{\hat{\tau}_{V_{qp}}}{\tau_{V_{pq}}}$  &    0.09       &   0.06        &   0.4           &  0.05        &   0.03      \\ \hline
$\frac{\hat{\tau}_{I_{pq}}}{\tau_{V_{pq}}}$  & 1.40          &  0.30         & 0.53            & 1.04         &    4.41    \\ \hline
$\frac{\hat{\tau}_{I_{qp}}}{\tau_{V_{pq}}}$  &  1.42         &  0.35         &  0.53           &  1.02        &   4.44      \\ \hline
Branch                                       & $L_{68-81}$   & $L_{30-26}$   & $L_{8-30}$      & $L_{8-9}$    & $L_{9-10}$  \\ \hline
$\frac{\hat{\tau}_{V_{qp}}}{\tau_{V_{pq}}}$  &  0.09         & 0.06          &   0.06          & 0.05         & 0.05     \\ \hline
$\frac{\hat{\tau}_{I_{pq}}}{\tau_{V_{pq}}}$  & 4.61          & 0.29          &   0.91          &   0.36       &   0.34    \\ \hline
$\frac{\hat{\tau}_{I_{qp}}}{\tau_{V_{pq}}}$  &  4.69         &  0.33         &     0.92        &   0.38       &  0.36     \\ \hline
\end{tabular}
\end{table}

Table \ref{CF_ARE_table_non_optimal_RQM} shows the absolute $\mathrm{MARE}$s of the correction factors for the 10 branches.
The same observation (as Table \ref{CFR_ARE_table_non_optimal_RQM}) can be made for the correction factors (i.e., the estimates of the VT correction factors are better than the estimates of the CT correction factors). 
Additionally, it can be noticed that for most of the lines, the results are very good. The slightly higher 
$\mathrm{MARE}$ results are for those two branches, whose line parameter values make it challenging to select the optimal bin
(due to the two factors mentioned earlier).

\begin{table}
\caption{\% 
Absolute 
$\mathrm{MARE}$s for Correction Factors}
\label{CF_ARE_table_non_optimal_RQM}
\fontsize{9}{9}
\centering
\begin{tabular}{|l|c|c|c|c|c|}
\hline
Branch                 & $L_{30-38}$ & $L_{38-65}$     & $L_{65-64}$     & $L_{64-63}$  & $L_{65-68}$ \\ \hline
$\hat{\tau}_{V_{pq}}$  &    0.156    &   0.188         &   0.205         &  0.211       &   0.205      \\ \hline
$\hat{\tau}_{V_{qp}}$  & 0.188       &  0.205          & 0.212           & 0.215        &    0.207     \\ \hline
$\hat{\tau}_{I_{pq}}$  &  1.394      &  0.344          &  0.569          &  1.074       &   4.39     \\ \hline
$\hat{\tau}_{I_{qp}}$  &  1.407      &  0.393          &  0.573          &  1.061       &   4.42      \\ \hline
Branch                 & $L_{68-81}$ & $L_{30-26}$     & $L_{8-30}$      & $L_{8-9}$    & $L_{9-10}$  \\ \hline
$\hat{\tau}_{V_{pq}}$  &  0.207      & 0.151           &   0.144         & 0.129        & 0.125      \\ \hline
$\hat{\tau}_{V_{qp}}$  & 0.223       & 0.169           &   0.156         &   0.143      &   -- --
\\ \hline
$\hat{\tau}_{I_{pq}}$  &  4.61       &  0.333          &     0.924       &   0.379      &  0.356     \\ \hline
$\hat{\tau}_{I_{qp}}$  &  4.69       &  0.366          &     0.935       &   0.393      &  0.374     \\ \hline
\end{tabular}
\end{table}

\vspace{-1em}

\subsection{Optimal RQM Placement Results}
\label{RQM_placement_empirical_Results}
In the previous analysis, the RQM was pre-placed at $10$-end of branch $9$-$10$. 
However, if there is no RQM to begin with, then the optimal location for placing the RQM to solve the SLIC problem can be found using Algorithm \ref{alg:SCALP_Algo_3}.
Since the connected tree identified for the 118-bus system has 10 branches, 20 different locations are possible where the RQM can be placed.
To compare the accuracies across these locations, we compute the $\mu_{\mathrm{MARE}}$ index as explained in Section \ref{OptLoc_new_RQM_SLIC}. 
The results are summarized in Table \ref{RQM_placement_using_CF_ARE}, in which the $\mu_{\mathrm{MARE}}$ is calculated by systematically placing the lone RQM at either the `from' end or the `to' end of every branch in the connected tree.

\begin{table}[H]
\caption{Optimal RQM Placement using $\mu_{\mathrm{MARE}}$ Index}
\label{RQM_placement_using_CF_ARE}
\fontsize{9}{9}
\centering
\begin{tabular}{|l|c|c|c|c|c|}
\hline
Branch   & $L_{8-30}$  & $L_{30-38}$ & $L_{38-65}$ & $L_{65-64}$ & $L_{64-63}$ \\ \hline
From end & 0.793       & 0.782      & 0.793       & 0.787       & 0.801       \\ \hline
To end   & 0.786       & 0.791       & 0.787      & 0.793       & 0.806       \\ \hline
Branch   & $L_{65-68}$ & $L_{68-81}$ & $L_{30-26}$ & $L_{8-9}$   & $L_{9-10}$  \\ \hline
From end & 0.787       & 0.806       & 0.788       & 0.799       & 0.799       \\ \hline
To end   & 0.792       & 0.804       & 0.787       & 0.801       & 0.804       \\ \hline
\end{tabular}
\end{table}

It can be observed from Table \ref{RQM_placement_using_CF_ARE} that an RQM located at $30$-end of branch $30$-$38$ gives the best performance (lowest $\mu_{\mathrm{MARE}}$ index).
Moreover, centrally located buses have lower indices than the buses in the periphery of the connected tree. 
This is in-line with our intuition that a centrally located node is suitable for RQM placement (see last paragraph of Section \ref{OptLoc_new_RQM_SLIC}). This result also implies that for a large connected tree in which it may be computationally burdensome
to compute $\mu_{\mathrm{MARE}}$ for every possible location, the bus identified using a centrality measure, such as betweenness centrality,
would be a good choice for placing the RQM for the SLIC problem.

Finally, we re-do the analysis done in Section \ref{SLIC - Performance Analysis - HV Network results} by placing the RQM at $30$-end of branch $30$-$38$. The results are shown in Tables \ref{LPE_ARE_table_optimal_RQMp}-\ref{CF_ARE_table_optimal_RQMp}.
The $\mathrm{MARE}$s of the line parameter estimates decreased from 0.78\% to 0.73\% on average (compare Tables \ref{LPE_ARE_table_non_optimal_RQM} and \ref{LPE_ARE_table_optimal_RQMp}).
Similarly, the average $\mathrm{MARE}$ of the correction factors improved from 0.81\% to 0.78\% (compare Tables \ref{CF_ARE_table_non_optimal_RQM} and \ref{CF_ARE_table_optimal_RQMp}).
In summary, it is realized that if a utility has the choice, then it should place the RQM at the location determined by Algorithm \ref{alg:SCALP_Algo_3} for solving the SLIC problem.

\begin{table}[H]
\caption{\% $\mathrm{MARE}$s for Line Parameters}
\label{LPE_ARE_table_optimal_RQMp}
\fontsize{9}{9}
\centering
\begin{tabular}{|l|c|c|c|c|c|}
\hline
Branch    & $L_{30-38}$ & $L_{38-65}$ & $L_{65-64}$ & $L_{64-63}$  & $L_{65-68}$ \\ \hline
$r_{pq}$  &    0.673   &   0.366   &   0.386         &  0.534       &   1.64      \\ \hline
$x_{pq}$  & 0.675    &  0.365    & 0.383           & 0.526        &    1.65     \\ \hline
$b_{pq}$  &  0.671    &  0.371    &  0.387          &  0.542       &   1.64      \\ \hline
Branch    & $L_{68-81}$ & $L_{30-26}$ & $L_{8-30}$ & $L_{8-9}$   & $L_{9-10}$  \\ \hline
$r_{pq}$  &  2.10    & 0.361          &   0.463    & 0.388     & 0.384      \\ \hline
$x_{pq}$  & 2.11     & 0.356         &   0.459    &   0.385   &   0.381    \\ \hline
$b_{pq}$  &  2.11    &  0.368         &     0.468  &   0.379   &  0.375     \\ \hline
\end{tabular}
\end{table}

\begin{table}[H]
\caption{\% Absolute $\mathrm{MARE}$s for Correction Factors}
\label{CF_ARE_table_optimal_RQMp}
\fontsize{9}{9}
\centering
\begin{tabular}{|l|c|c|c|c|c|}
\hline
Branch                 & $L_{30-38}$ & $L_{38-65}$     & $L_{65-64}$     & $L_{64-63}$  & $L_{65-68}$ \\ \hline
$\hat{\tau}_{V_{pq}}$  &    -- --
&   0.144         &   0.159         &  0.163       &   0.159      \\ \hline
$\hat{\tau}_{V_{qp}}$  & 0.145       &  0.160          & 0.164           & 0.167        &    0.161     \\ \hline
$\hat{\tau}_{I_{pq}}$  &  1.371      &  0.295          &  0.575          &  0.964       &   4.33     \\ \hline
$\hat{\tau}_{I_{qp}}$  &  1.373      &  0.337          &  0.578          &  0.967       &   4.36      \\ \hline
Branch                 & $L_{68-81}$ & $L_{30-26}$     & $L_{8-30}$      & $L_{8-9}$    & $L_{9-10}$  \\ \hline
$\hat{\tau}_{V_{pq}}$  &  0.161      & 0.114           &   0.115         & 0.119        & 0.133      \\ \hline
$\hat{\tau}_{V_{qp}}$  & 0.193       & 0.127           &   0.120         &   0.133      &   
-- --
\\ \hline
$\hat{\tau}_{I_{pq}}$  &  4.84       &  0.342          &     0.893       &   0.365      &  0.390     \\ \hline
$\hat{\tau}_{I_{qp}}$  &  4.88       &  0.374          &     0.920       &   0.383      &  0.407     \\ \hline
\end{tabular}
\end{table}


\subsection{Sensitivity Analysis}
In this sub-section, we investigate sensitivity of the proposed approach to varying levels of
PMU noise and 
IT accuracy class to understand how they impact its performance.

\subsubsection{Additive PMU Noise}
\label{SLIC - Sensitivity Analysis - Additive Noise}
While keeping all other parameters the same as Section \ref{RQM_placement_empirical_Results}, the standard deviation (denoted by $\sigma$) of the Gaussian noise of the PMU was changed from $0.01\%$ to $0.03\%$ to $0.05\%$.
Fig. \ref{Imapct_addi_noise_scalp_r_3} shows the $\mathrm{MARE}$ of the resistance estimates obtained after $M=1000$ MC simulations, each of which was conducted 
for a specific value of $\sigma$.
Very similar results were obtained for the reactance and susceptance estimates as well, which is why the following description focuses on only the resistance estimates (provided in Fig. \ref{Imapct_addi_noise_scalp_r_3}). 

\begin{figure}[H]
    \centering
    \includegraphics[width=0.36\textwidth, trim={0.5cm 0.5cm 0.5cm 0.5cm}]{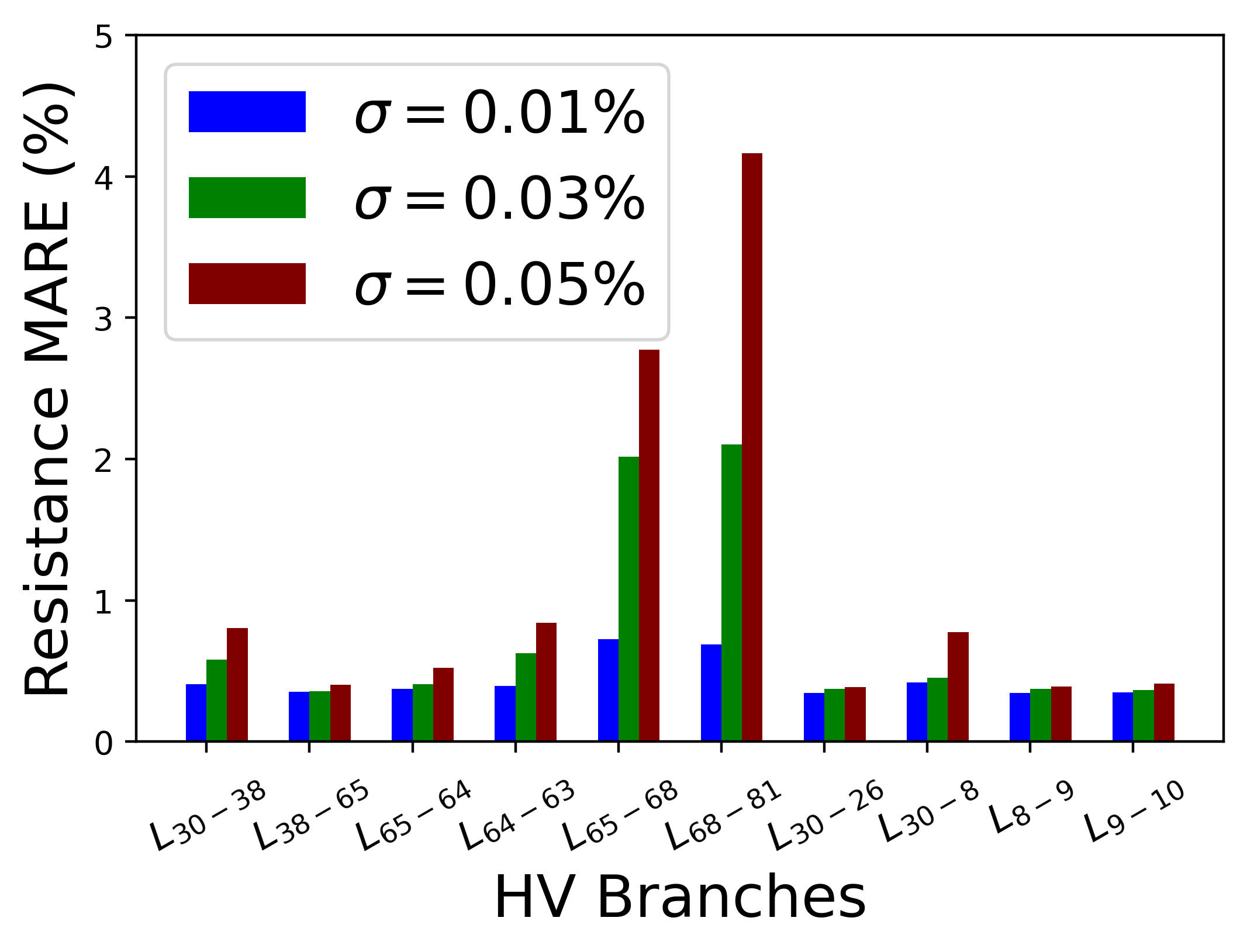}
    \caption{Impact of additive noise on resistance estimate}
    \label{Imapct_addi_noise_scalp_r_3}
\end{figure}

In line with our expectation, as $\sigma$ increases, the $\mathrm{MARE}$s increase as well. 
The relatively higher errors for $L_{65-68}$ and $L_{68-81}$ 
is due to the two factors mentioned in Section \ref{SLIC - Performance Analysis - HV Network results} 
(a combination of additive noise and very small variation in the $W_{pq}$ values across all the bins).
Particularly, the additive noise impacts the general performance of the proposed approach in two ways: 
The first is when TLS is done to solve \eqref{eqn:SLIC_real_abstraction} to obtain $\hat{\theta}$. In presence of low additive noise, a very accurate estimate of $\theta$ can be obtained. However, as the amount of noise increases, the difference between $\hat{\theta}$ and the true value of $\theta$, namely $\theta^*$, increases. This lowering of the TLS accuracy impacts the subsequent estimations.
The second way in which the additive noise affects the proposed approach is in the calculation of the variable $\rho$. This variable was used to propagate the accuracy of RQMs across multiple branches. However, $\rho$ is calculated from the measurements while suppressing
the additive noise (see \eqref{eqn:gamma_definition}). Therefore, a higher additive noise would result in a less accurate estimation of $\rho$, which in turn would impact the accuracy of the subsequent estimates.

\subsubsection{IT Accuracy Class}
In this sensitivity study, the performance of the proposed approach for SLIC is analyzed for three different classes of regular ITs (namely, 0.3, 0.6, and 1.2)
\cite{IEEE_C57_13_2016_std_for_ITs}.
The accuracy of the RQM is not unchanged.
That is, every parameter other than the RE limit of the regular ITs is the same as that in Section \ref{RQM_placement_empirical_Results}. 
The average $\mathrm{MARE}$s of the line parameter estimates and the average absolute $\mathrm{MARE}$s of the IT correction factors - both average being computed across all ten branches - are displayed in Fig. 
\ref{Impact_IT_class}.

\begin{figure}[ht]
    \centering
    \begin{subfigure}[b]{0.45\linewidth} 
        \includegraphics[width=\linewidth]{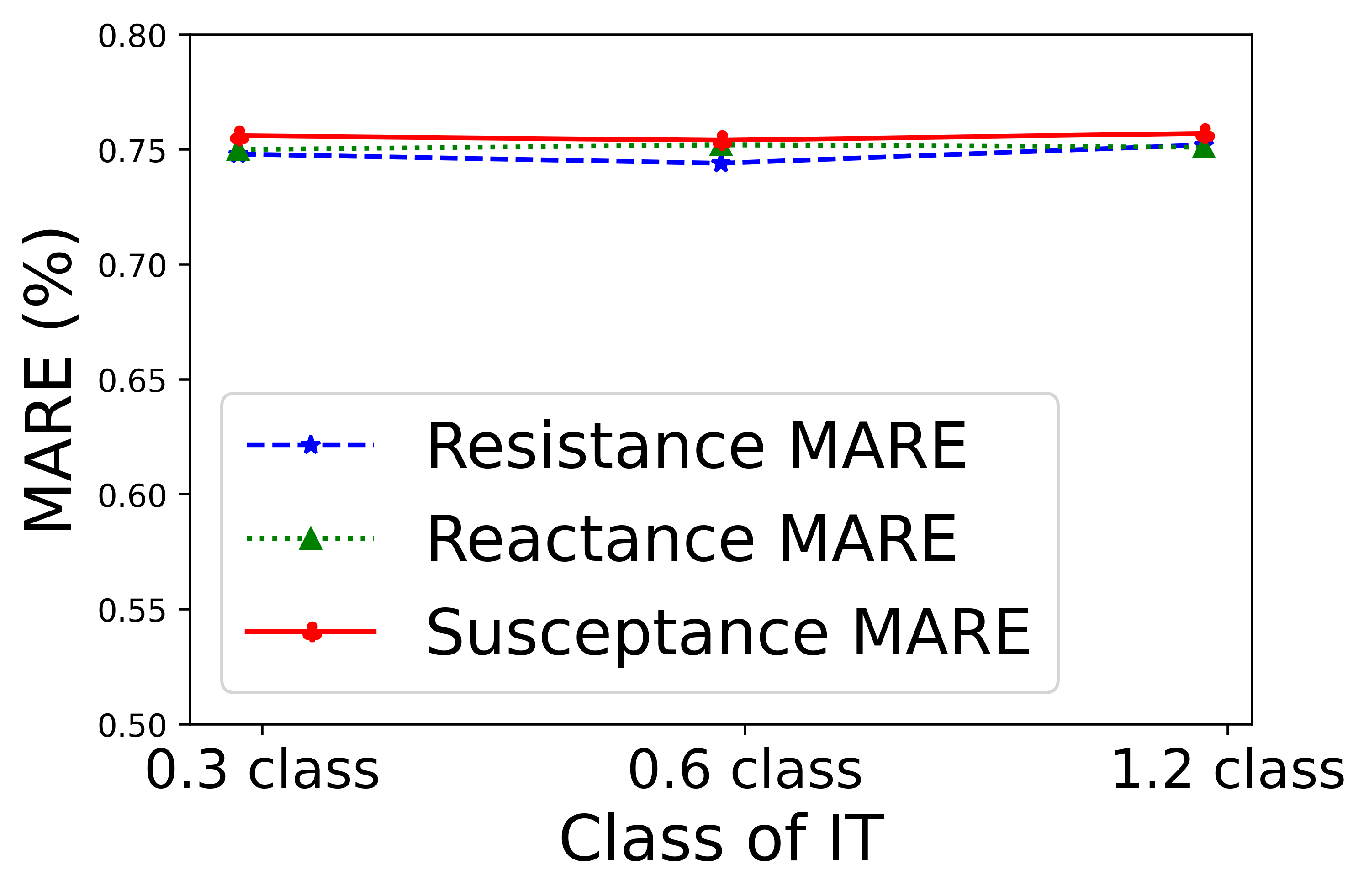}
        \caption{Line parameter estimates}
        \label{Impact_IT_class_LPE}
    \end{subfigure}
    \begin{subfigure}[b]{0.45\linewidth} 
        \includegraphics[width=\linewidth]{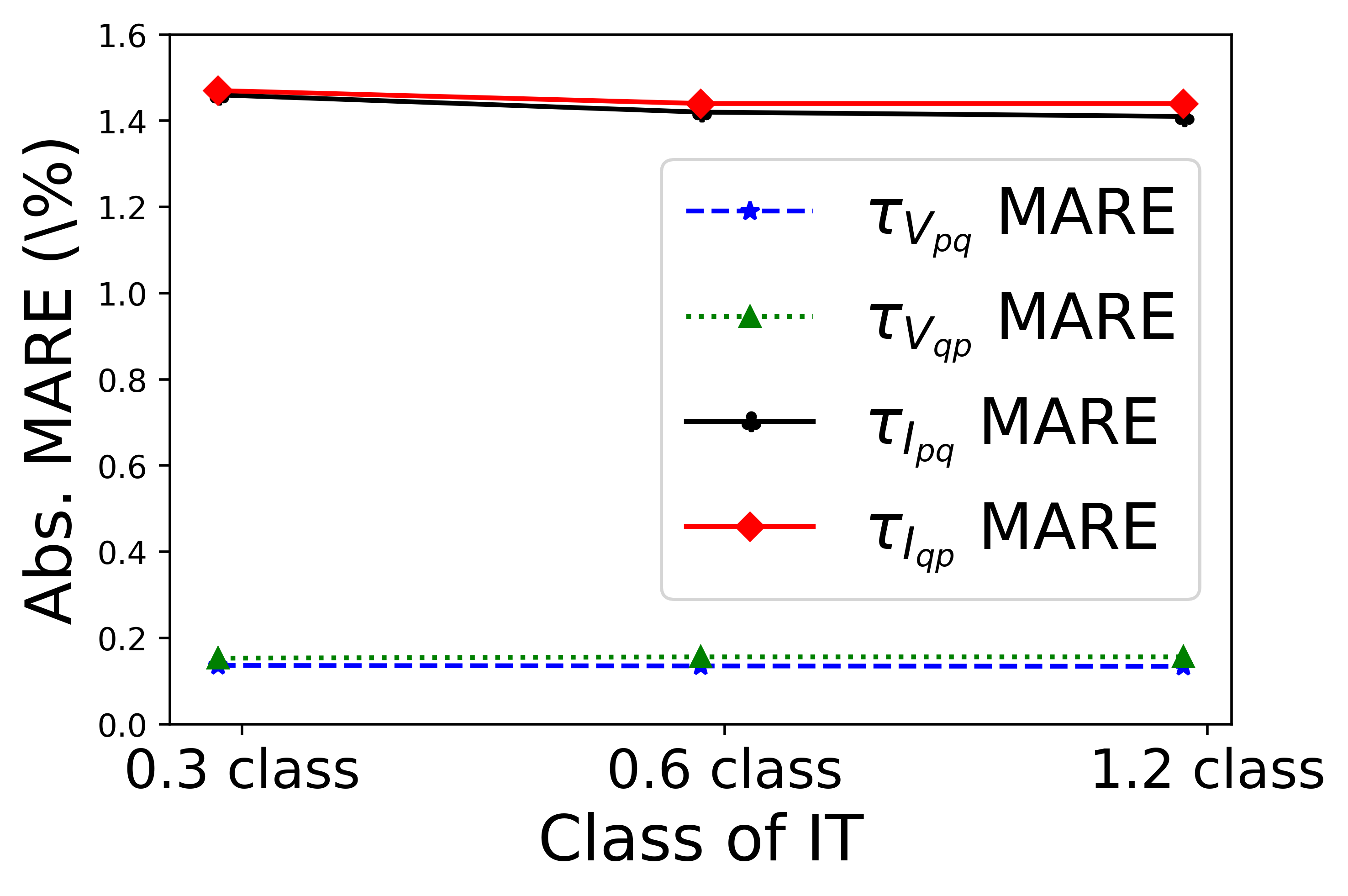}
        \caption{Correction factor estimates}
        \label{Impact_IT_class_CF}
    \end{subfigure}
    \caption{Impact of different IT classes on proposed solution}
    \label{Impact_IT_class}
\end{figure}

It can be observed from Fig. \ref{Impact_IT_class}
that the different accuracy classes of regular ITs have minimal impact on the performance of the proposed approach for SLIC. 
This happens because RE of the regular ITs are treated as an unknown parameter during the estimation process, and compensated subsequently by finding the appropriate correction factor.
Thus, the proposed approach is fundamentally superior to prior LPE approaches (such as \cite{mishra2015kalman,mansani2018estimation}) that estimated the line parameters without considering the effects of IT errors.

\subsection{Comparison with a Similar State-of-the-Art Approach}

A recent study by Wang et al. \cite{wang2019transmission} also addressed the SLIC problem in the context of transmission systems in which PMUs were only placed on the HV buses of the network; i.e., the lines for which they performed SLIC matched the definition of connected tree of our paper.
They placed their RQMs at $81$-end of branch $68$-$81$ of the 118-bus system, and proved analytically that the additive PMU noise has zero-mean Gaussian distribution in Cartesian coordinates.
However, \cite{wang2019transmission} required EOVTs as well as MOCTs at that location (i.e., RQMs for both voltage and current ITs) in comparison to just EOVT for the proposed approach. Moreover, their PMU noise had a $\sigma=0.0003\%$.
In this sub-section, we compare the performance of the proposed approach with \cite{wang2019transmission} for LPE (Fig. \ref{Impact_IT_class_LPE_comparison_chenwang}) and IT correction factor estimation (Fig. \ref{Impact_IT_class_CFE_comparison_chenwang}), respectively. For fairness of comparison, we placed our EOVT (RQM) at $81$-end of branch $68$-$81$, and changed the standard deviation of the additive PMU noise to match the one used in \cite{wang2019transmission}.

\begin{figure}[ht]
    \centering
    \includegraphics[width=0.85\linewidth]{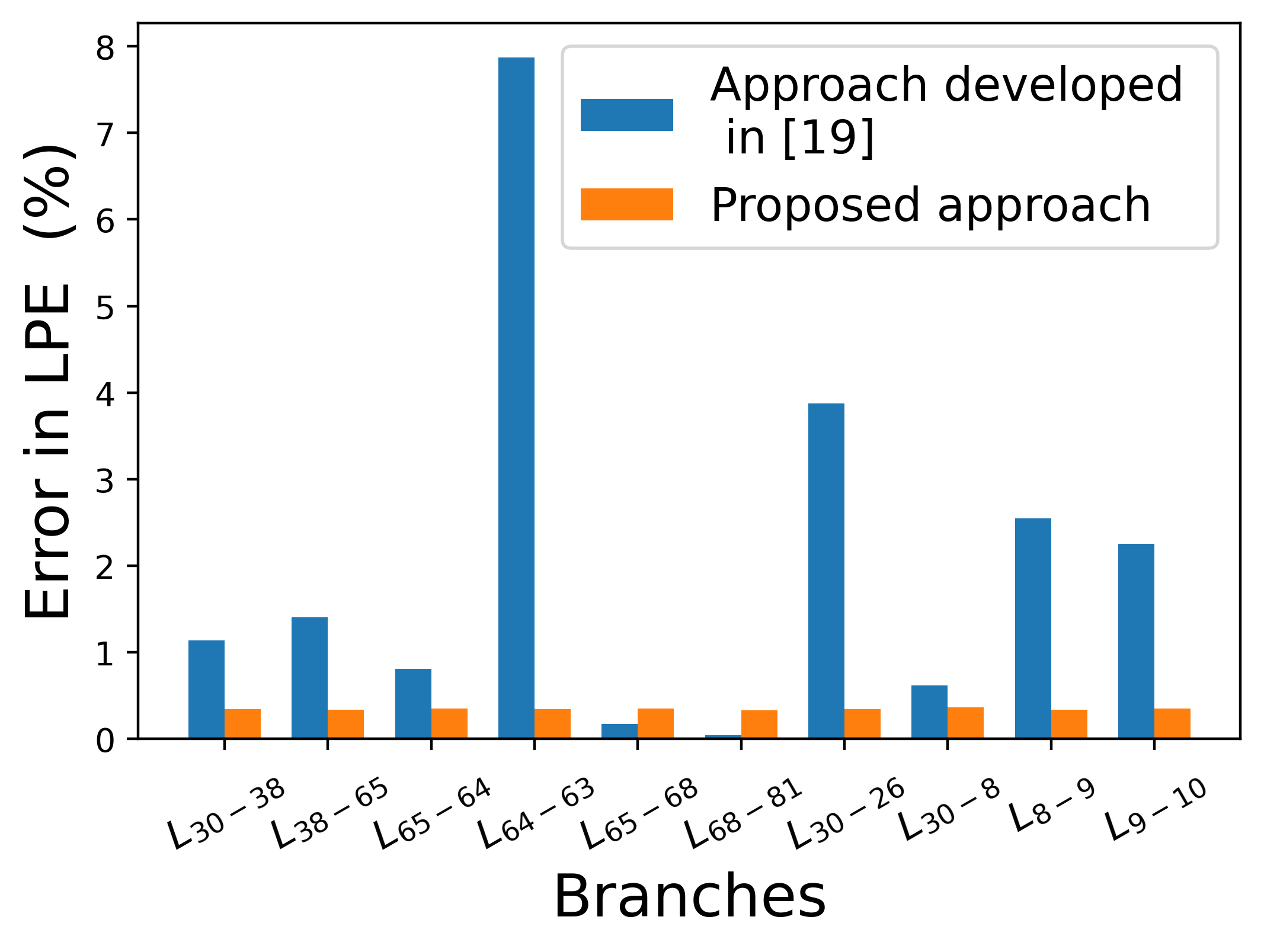}
    \vspace{-0.5em}
    \caption{Comparing line parameter estimates with \cite{wang2019transmission} }
\label{Impact_IT_class_LPE_comparison_chenwang}
\end{figure}
\begin{figure}[ht]
    \centering
    \includegraphics[width=0.85\linewidth]{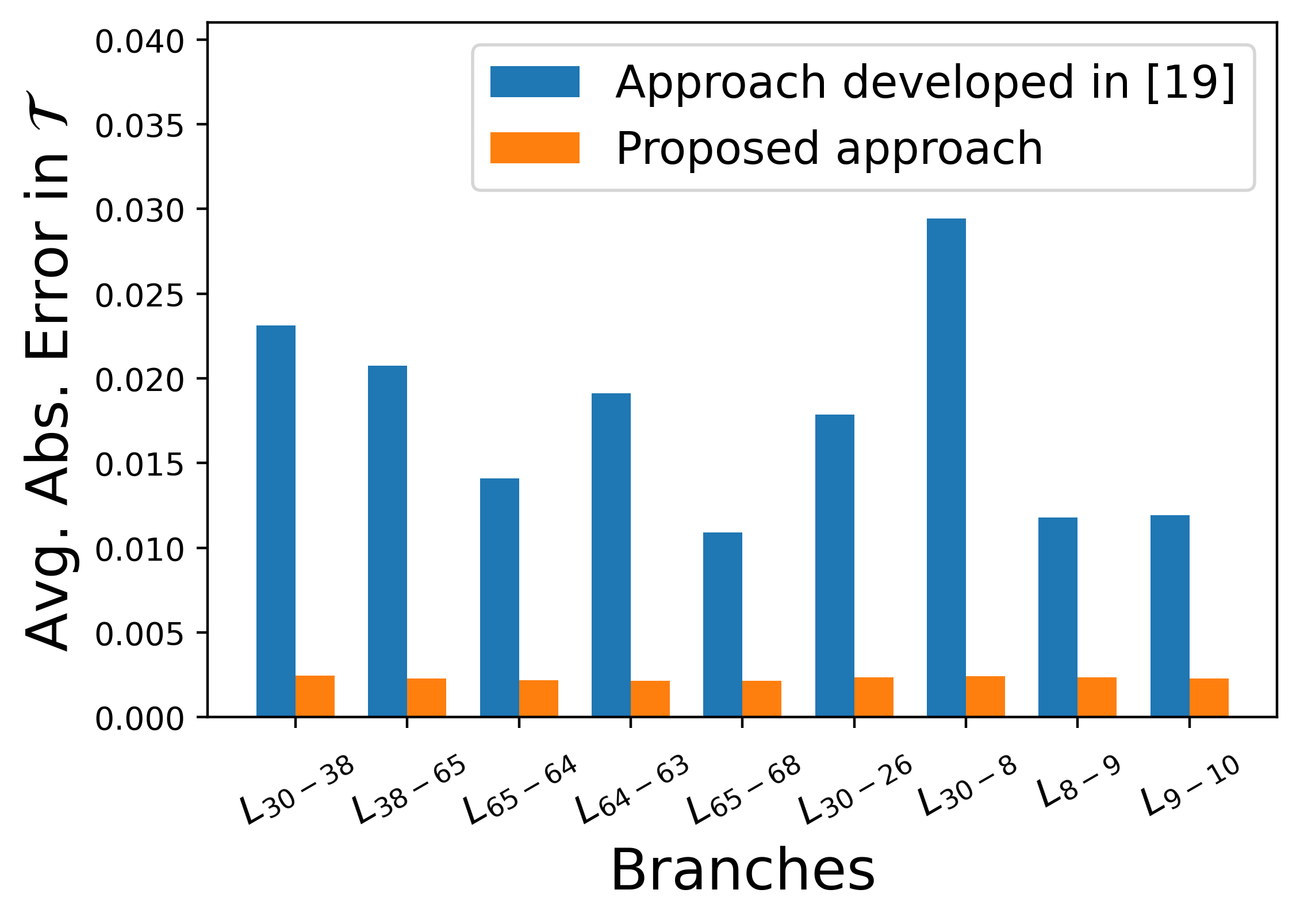}
    \vspace{-0.5em}
    \caption{Comparing IT correction factor estimates with \cite{wang2019transmission} }
\label{Impact_IT_class_CFE_comparison_chenwang}
\end{figure}

Fig. \ref{Impact_IT_class_LPE_comparison_chenwang} displays the average of the $\mathrm{MARE}$s of the three line parameters for every line. 
A visual inspection of the bar chart 
confirms that the proposed approach comprehensively outperforms the approach developed in \cite{wang2019transmission} for any branch that is more than one hop away from the RQM branch. 
Quantitatively, the average LPE error in \cite{wang2019transmission} was 2.07\%, whereas it was only 0.34\% for the proposed approach.
Similarly, Fig. \ref{Impact_IT_class_CFE_comparison_chenwang} compares the average of the absolute error of IT correction factor estimates for every branch except the $L_{68-81}$ branch (which has the reference RQM).
It can be noticed that for every branch, the results obtained using the proposed method are much better in comparison
to that obtained from \cite{wang2019transmission}. 
This can be further validated from the fact that the average of the absolute error across all the branches 
was $0.017$ in the case of \cite{wang2019transmission}, while it was only $0.0022$ for the proposed approach.

\subsection{Performing SLIC with Field PMU Data}
In this study, we applied the proposed approach to actual PMU data acquired from a power utility in the Eastern Interconnection of the U.S.
The test was performed on a 
part of the network comprising four buses (A, B, C, and D) joined by three branches (A-B, B-C, and C-D). 
One
obstacle that was encountered when working with
real-world data was the lack of knowledge of the ground truth.
To 
circumvent this problem,
we adopted a validation methodology that focused on the \textit{consistency} of the estimates. 
The methodology involved segregating the collected measurement data into two sets. The first set, designated as $S_1$, comprises voltage and current measurements recorded on Mondays, Wednesdays, and Fridays over a period of two consecutive weeks. In contrast, the second set, $S_2$, consists of similar measurements (in terms of time of day) but captured on Tuesdays and Thursdays of three consecutive weeks. 
A comparison of the results that were obtained when the proposed approach and validation methodology were applied to the two datasets for LPE and IT correction factor estimation are presented in Table \ref{OGnE_SLIC_LPE} and Table \ref{OGnE_SLIC_CFE}, respectively.

It can be observed
from Table \ref{OGnE_SLIC_LPE} that the line parameter estimates from $S_1$ and $S_2$ are identical, which validates the estimation consistency.
From Table \ref{OGnE_SLIC_CFE}, it can be observed that
the estimates derived from $S_1$ and $S_2$ for the VTs exhibit a very high degree of similarity (see entries in columns 3 and 4).
The corresponding entries for the CTs show a similar pattern, but with minor variations (see columns 5 and 6); this is due to the reason mentioned in Section  \ref{SLIC - Performance Analysis - HV Network results} (CT correction factors require knowledge of additional quantities).
Collectively, Tables \ref{OGnE_SLIC_LPE} and \ref{OGnE_SLIC_CFE}: (a) support the notion that correction factors only change periodically, implying that solving the SLIC problem every few months is sufficient, and (b) substantiate the effectiveness of the proposed solution to the SLIC problem for actual power system conditions.

\begin{table}
\centering
\caption{Line parameters estimates obtained using proposed approach with real PMU data}
\label{OGnE_SLIC_LPE}
\scriptsize
\centering
\begin{tabular}{|c|c|c|c|c|}
\hline
\multicolumn{1}{|l|}{Branch} &  Set  & r (in p.u.)       & x (in p.u.)     & b (in p.u.)      \\ \hline
\multirow{2}{*}{A-B}  & $S_1$ & 0.00238 & 0.0315 & 0.3503 \\ \cline{2-5} 
                       & $S_2$ & 0.00238 & 0.0315 & 0.3503 \\ \hline
\multirow{2}{*}{B-C}  & $S_1$ & 0.00384 & 0.0518 & 0.5755 \\ \cline{2-5} 
                       & $S_2$ & 0.00384 & 0.0518 & 0.5755 \\ \hline
\multirow{2}{*}{C-D}   & $S_1$ & 0.00269 & 0.0248 & 0.43   \\ \cline{2-5} 
                       & $S_2$ & 0.00269 & 0.0248 & 0.43   \\ \hline
\end{tabular}
\end{table}

\begin{table}
\centering
\caption{IT correction factor estimates obtained using proposed approach with real PMU data}
\label{OGnE_SLIC_CFE}
\scriptsize
\centering
\begin{tabular}{|c|c|c|c|c|c|}
\hline
\multicolumn{1}{|l|}{Branch} & Set   & $\hat{\tau}_{V_{pq}}$           & $\hat{\tau}_{V_{qp}}$            & $\hat{\tau}_{I_{pq}}$              & $\hat{\tau}_{I_{qp}}$          \\ \hline
\multirow{2}{*}{A-B}  & $S_1$ & -- --     & 1.00-0.01j & 1.13+0.01j     & 1.13+0.01j \\ \cline{2-6} 
                       & $S_2$ & -- --        & 1.01-0.00j   & 1.12+0.01j     & 1.13+0.00j \\ \hline
\multirow{2}{*}{B-C}  & $S_1$ & 1.01+0.00j & 1.01+0.01j  & 1.09+0.01j & 1.09+0.02j \\ \cline{2-6} 
                       & $S_2$ & 1.01+0.01j & 1.01+0.01j   & 1.10+0.02j    & 1.09+0.02j \\ \hline
\multirow{2}{*}{C-D}   & $S_1$ & 1.00+0.01j & 0.99+0.01j   & 1.12-0.02j    & 1.11-0.01j  \\ \cline{2-6} 
                       & $S_2$ & 1.01+0.01j & 0.99+0.02j  & 1.13-0.03j    & 1.13-0.00j \\ \hline
\end{tabular}
\end{table}

\section{Conclusion}
\label{Conclusion}
This paper presents a method to simultaneously estimate line parameters and calibrate ITs in a power system that has a connected tree. 
The proposed approach employs TLS, which is a statistical method, in combination with a novel quantization procedure to uniquely estimate line parameters and correction factors of a single branch. The methodology is then extended to perform system-wide SLIC
by exploiting the multiple independent observations
of the bus voltages by different PMUs.
The results obtained using the IEEE 118-bus system confirmed exceptional estimation accuracy under normal conditions.
Sensitivity studies conducted by
varying the 
additive noise and considering
different IT accuracy classes indicated stable performance under diverse conditions.
A strategy
to find the optimal RQM location 
for solving the SLIC problem was also successfully demonstrated.
Finally, consistency of the estimates while using the proposed solution with field PMU data validated its usability in practical scenarios. 

\appendix

\setcounter{equation}{0}
\numberwithin{equation}{subsection}

\subsection{Proof of Theorem \ref{Theorem_1}}\label{appendix1}

\begin{proof}
Let the resistance, reactance, and shunt susceptance of a transmission line under consideration be $r$, $x$, and $b$, respectively.
Assume that for two bin numbers $m_1$ and $m_2$, $f_W(m_1) = f_W(m_2)$. Then, by definition (see \eqref{f_W_definition}), we have:
\begin{equation}
\label{Unique_eta_proof_a}
\begin{aligned}
& f_W(m_1) = f_W(m_2) \\
\implies   &
1 - x(m_1)b(m_1) + j (r(m_1)b(m_1)) =  1 - x(m_2)b(m_2)  \\
&+ j (r(m_2)b(m_2)).
\end{aligned}
\end{equation}

Separating the real and imaginary parts of \eqref{Unique_eta_proof_a}, and substituting the values of $r(m)$, $x(m)$, and $b(m)$, we get, 
\begin{equation}
\label{Unique_eta_proof_b}
\begin{aligned}
 x_{\dagger} (1+ \delta_{x} m_1)  b_{\dagger} (1+ \delta_{b} m_1) &=  x_{\dagger} (1+ \delta_{x} m_2)  b_{\dagger} (1+ \delta_{b} m_2) \\
   r_{\dagger} (1+ \delta_{r} m_1)  b_{\dagger} (1+ \delta_{b} m_1) &=  r_{\dagger} (1+ \delta_{r} m_2)  b_{\dagger} (1+ \delta_{b} m_2).
\end{aligned}
\end{equation}

Rearranging terms of the first sub-equation of \eqref{Unique_eta_proof_b}, and considering the fact that $x_{\dagger} \neq 0$, $r_{\dagger} \neq 0$, and $b_{\dagger} \neq 0$, we get,
\begin{equation}
\label{Unique_eta_proof_c}
\begin{aligned}
&(\delta_{x} m_1 + \delta_{b} m_1 + \delta_{x} \delta_{b} m_1^2) = (\delta_{x} m_2 + \delta_{b} m_2 + \delta_{x} \delta_{b} m_2^2) \\
&\delta_{x} (m_1 - m_2)+ \delta_{b} (m_1 - m_2) + \delta_{x} \delta_{b}  (m_1^2 - m_2^2)  =0 \\
& (m_1 - m_2) \left[ \delta_{x}+ \delta_{b}  + \delta_{x} \delta_{b}  (m_1 + m_2) \right] =0.
\end{aligned}
\end{equation}

Similarly, rearranging the terms of the second sub-equation of \eqref{Unique_eta_proof_b}, we get,
\begin{equation}
\label{Unique_eta_proof_d}
\begin{aligned}
&(\delta_{r} m_1 + \delta_{b} m_1 + \delta_{r} \delta_{b} m_1^2) = (\delta_{r} m_2 + \delta_{b} m_2 + \delta_{r} \delta_{b} m_2^2) \\
&\delta_{r} (m_1 - m_2)+ \delta_{b} (m_1 - m_2) + \delta_{r} \delta_{b}  (m_1^2 - m_2^2)  =0 \\
& (m_1 - m_2) \left[ \delta_{r}+ \delta_{b}  + \delta_{r} \delta_{b}  (m_1 + m_2) \right] =0. \hspace{-1.5cm}
\end{aligned}
\end{equation}

Both \eqref{Unique_eta_proof_c} and \eqref{Unique_eta_proof_d} must be simultaneously satisfied for $f_W(m_1)=f_W(m_2)$. Consequently, two scenarios emerge. 
The first scenario is  $m_1 = m_2$. The second scenario is $\left[ \delta_{x}+ \delta_{b}  + \delta_{x}  \delta_{b}  (m_1 + m_2)  \right] =0$, and $ \left[ \delta_{r}+ \delta_{b}  + \delta_{r} \delta_{b}  (m_1 + m_2) \right] =0 $. Let  \mbox{$u_1 =  \delta_{b}m_1$,}  and \mbox{$u_2 =  \delta_{b} m_2$}. Then, the second scenario becomes $\left[ \delta_{x}+ \delta_{b}  + \delta_{x}   (u_1 + u_2)  \right] =0$, and $ \left[ \delta_{r}+ \delta_{b}  + \delta_{r}   (u_1 + u_2) \right] =0 $.
This scenario can now be reduced to
\begin{equation}
\label{Unique_eta_proof_e}
\begin{aligned}
\delta_{r} = \delta_{x} \text{ AND } u_1 + u_2 = \frac{-\delta_r - \delta_b}{\delta_r}.
\end{aligned}
\end{equation}

Note that $u_1$ and $u_2$ can take real values in the interval $[-0.3, 0.3]$ (see second paragraph of Section \ref{QP4IBSLIC}).
Now, since there are infinite real numbers between $[-0.3, 0.3]$, the probability of the sum of $u_1$ and $u_2$ being equal to a specific number (in this case, $(-\delta_r - \delta_b)/\delta_r$), is zero. That is, 
\begin{equation}
\label{Unique_eta_proof_f}
\begin{aligned}
\mathbb{P} \left[ u_1 + u_2 = \frac{-\delta_r - \delta_b}{\delta_r} \right] =
0\\
\end{aligned}
\end{equation}

Now, an event that happens with probability zero happens \textit{almost never} \cite{gradel2007finite}.
Therefore, from \eqref{Unique_eta_proof_f} it can be inferred that the second scenario (given by \eqref{Unique_eta_proof_e}) is unlikely to ever occur.
Hence, the first scenario,  \mbox{$ (m_1 - m_2) = 0$}, is the only practical possibility.
This proves that 
\begin{equation}
\begin{aligned}
\hspace{-5pt} &f_W(m_1) = f_W(m_2) \implies m_1 = m_2 \implies \eta_{m_1} = \eta_{m_2} \hspace{-0.5cm}
 \end{aligned}
\end{equation}

The converse is trivial:
\begin{equation}
\resizebox{0.95\hsize}{!}{$
\begin{aligned}
 & m_1 = m_2  \implies f_W(m_1) = f_W(m_2)  \:\: \because \text{  $f_W$ definition} \\
 \therefore & f_W(m_1) = f_W(m_2) \iff m_1 = m_2 \iff \eta_{m_1} = \eta_{m_2}
 \end{aligned}
  $}
\end{equation}

To summarize, the $W$ values of two bin numbers are the same only if the bin numbers are the same, and correspondingly have the same line parameter values. This, in turn, implies that the proposed quantization procedure guarantees the unique determination of the line parameters.  
\end{proof}

\bibliographystyle{IEEEtran}
\bibliography{ References/SLIC_Meta_References}

\end{document}